\newtheorem{prop}{Proposition}
\newtheorem{theorem}{Theorem}
\newtheorem{lemma}[theorem]{Lemma}
\DeclarePairedDelimiter\abs{\lvert}{\rvert} 
\begin{document}

\title{The {M}ulti-{C}luster {F}luctuating \\{T}wo-{R}ay {F}ading {M}odel}

\author{Jos\'e~David~Vega-S\'anchez, F. Javier L\'opez-Mart\'inez, Jos\'e F. Paris and Juan M. Romero-Jerez
\thanks{Manuscript received MONTH xx, YEAR; revised XXX. The review of this paper was coordinated by XXXX. The work of F.~J.~L\'opez-Mart\'inez, J.~F.~Paris and J.~M.~Romero-Jerez was funded in part by Junta de Andaluc\'ia, the European Union and the European Fund for Regional Development FEDER through grants P21-00420, P18-RT-3175 and EMERGIA20-00297, and in part by MCIN/AEI/10.13039/501100011033 through grant PID2020-118139RB-I00. (\textit{Corresponding author: Jos\'e~David~Vega-S\'anchez})}
\thanks{J.~D.~Vega-S\'anchez is with the Faculty of Engineering and Applied Sciences, Telecommunications Engineering, Universidad de Las Am\'ericas (UDLA), 170503 Quito, Ecuador. (E-mail: $\rm jose.vega.sanchez@udla.edu.ec$).}
\thanks{F.~J.~L\'opez-Mart\'inez, J.~F.~Paris and J.~M.~Romero-Jerez are with the Communications and Signal Processing Lab, Telecommunication Research Institute (TELMA), Universidad de M\'alaga, M\'alaga, 29010, (Spain). F.~J.~L\'opez-Mart\'inez is also with the Dept. Signal Theory, Networking and Communications, University of Granada, 18071, Granada (Spain). (E-mails: $\rm fjlm@ugr.es, paris@ic.uma.es, romero@dte.uma.es$)}
}

\maketitle

\begin{abstract}
We introduce and characterize the {Multi-cluster Fluctuating Two-Ray (MFTR)} fading channel, generalizing \emph{both} the fluctuating two-ray (FTR) and the \mbox{$\kappa$-$\mu$} shadowed fading models through a more general yet equally {\mbox{mathematically}} tractable model. We derive all the chief probability functions of the MFTR model such as probability density function (PDF), cumulative {distribution} function (CDF), and moment {generating} function (MGF) in closed-form, having {a mathematical complexity similar to} other fading models in the state-of-the-art. We also provide two additional analytical formulations for the PDF and the CDF: (\emph{i}) in terms of a continuous mixture of $\kappa$-$\mu$ shadowed distributions, and (\emph{ii}) as an infinite {discrete} mixture of Gamma distributions. Such expressions enable to conduct performance analysis under MFTR fading by \emph{directly} leveraging readily available results for the $\kappa$-$\mu$ shadowed or Nakagami-$m$ cases, respectively. {We demonstrate that the MFTR fading model provides a much better fit than FTR and $\kappa$-$\mu$ shadowed models for small-scale measurements of channel amplitude in outdoor Terahertz (THz) wireless links.} Finally, the performance of wireless {communications} systems undergoing MFTR fading is exemplified in terms of classical benchmarking metrics like the outage probability, both in exact and asymptotic forms, and the amount of fading. 
\end{abstract}

\begin{IEEEkeywords}
Generalized fading channels, wireless channel modeling, moment generating function, multipath propagation, fluctuating two-ray.
\end{IEEEkeywords}


\section{Introduction}
\IEEEPARstart{A}{s} radio signals travel along a certain propagation environment, their interaction with objects, system agents and the propagation medium itself causes a number of effects that ultimately determine the amount of power being received at a given location. Classically, these phenomena have been well-characterized through the central limit theorem (CLT), that gives rise to a family of Gaussian models, being the most popular and widespread-used the Rayleigh/Rician models and also including more general ones like the Hoyt and Beckmann models \cite{Simon,Beckmann1962}. While CLT-based models have sufficed to characterize the effects of fading for decades, the rapid growth of wireless technologies along the $21^{\rm st}$ century requires more sophisticated efforts to properly capture the intrinsic nature of wireless propagation as we move up in frequency (where the CLT assumption may no longer hold) \cite{ftr,Marins2019}, or when new use cases demand for improved accuracy, especially in {highly} low-outage regimes \cite{Eggers2019,Mehrnia2022}. 

The key advances in wireless fading channel modeling during the last years have been largely based on two different approaches. On the one hand, power envelope-based formulations were used by Yacoub to propose two families of generalized models: the $\kappa$-$\mu$ and the $\eta$-$\mu$ distributions \cite{kappamuYacub}. These distributions were later generalized and unified through the $\kappa$-$\mu$ shadowed fading distribution \cite{kms,Laureano2016}, which has become rather popular in the literature thanks to its ability to model a wide number of propagation conditions through a limited set of physically-justified parameters. On the other hand, ray-based formulations express the received signal as a superposition of a number of scattered waves with random phases \cite{durgin2000theory}. Despite their clear physical motivation, the mathematical complexity associated to ray-based models grows when more than a few dominant waves are individually accounted for \cite{Romero2022}. However, Durgin's Two-Wave with Diffuse Power (TWDP) fading model and its subsequent generalizations \cite{durgin,Rao2015,ftr} have also managed to become popular in the context of wireless channel modeling for higher-frequency bands. Specifically, the Fluctuating Two-Ray (FTR) fading model \cite{ftr} has {widely} been adopted in mm-Wave  environments, and even recently in terahertz (THz) bands \cite{Du2022}.

Key to their success, the $\kappa$-$\mu$ shadowed and the FTR fading models share a number of features that are desirable for a stochastic fading model to be of practical use: (\textit{i}) have a small number (three, in both cases) of physically justified parameters; (\textit{ii}) have a reasonably good mathematical tractability; (\textit{iii}) include simpler but popular fading models as special cases. However, because these two models arise from different physical formulations, they do not capture the same type of propagation behaviors. For instance, the FTR model has the ability to exhibit bimodality that often appears in field measurements \cite{Mavridis15,Yoo16,Du2022}, while the $\kappa$-$\mu$ shadowed one is inherently unimodal in its original formulation \cite{kms}. On the other hand, the $\kappa$-$\mu$ shadowed model can exhibit any sort of power-law asymptotic decay (diversity order), while ray-based models asymptotically decay with unitary slope \cite{Romero2022}.

Aiming to reconcile the two dominant approaches in the literature of stochastic fading channel modeling, which have hitherto evolved separately, we here introduce the Multi-cluster Fluctuating Two-Ray (MFTR) fading model. This newly proposed model is presented as the natural generalization and unification of \emph{both} the FTR and the \mbox{$\kappa$-$\mu$} shadowed fading models.
Such generalization enables the presence of additional multipath clusters in the purely ray-based FTR model, and the consideration of two fluctuating specular components within {one of the clusters, typically the first one received, of the} \mbox{$\kappa$-$\mu$} shadowed formulation. Despite being a generalization of the FTR and the \mbox{$\kappa$-$\mu$} shadowed models, only one additional parameter is required over these baseline models. As we will later see, the MFTR model not only inherits the bimodality and asymptotic decay properties exhibited separately by the FTR and $\kappa$-$\mu$ shadowed models, respectively, but also brings out additional flexibility to model propagation features not captured by the models from which it is originated. The resulting formulations of the MFTR statistics are as tractable as those of the simpler baseline models, and of other fading models in the state-of-the-art. The key contributions of this work can be listed as follows:

\begin{itemize}
    \item We derive the chief probability functions for the MFTR model, i.e., probability density function (PDF), cumulative {distribution} function (CDF), and moment {generating} function (MGF) in closed-form. These expressions allow for the computation of the MFTR statistics using special functions similar to those used in simpler, well-established fading models such as Rician shadowed \cite{Abdi2003}, $\kappa$-$\mu$ shadowed \cite{kms}, or FTR \cite{ftr}.
    \item Aiming to facilitate the use of the MFTR fading model for performance analysis purposes, we provide two alternative analytical formulations for the PDF and the CDF: one in terms of a continuous mixture of $\kappa$-$\mu$ shadowed distributions, and another one as an infinite {discrete} mixture of Gamma distributions. This allows to leverage the rich literature of performance analysis, so that existing results available either for the $\kappa$-$\mu$ shadowed or the Nakagami-$m$ cases can be used as a starting point to \textit{straightforwardly} analyze the performance under MFTR fading.
    \item {
    To experimentally validate the suitability of our model for experimental data, we carried out a fit of the MFTR model to small-scale fading field measurements in the terahertz (THz) band for outdoor line-of-sight (LOS) and non-line-of-sight (NLOS) environments.    }
    \item These results are used to analyze {$i)$ the outage probability, $ii)$ the average bit error rate (BER), and $iii)$ the ergodic capacity (EC), under MFTR fading channels. The former two metrics are derived in exact and asymptotic form. Also, we} study the impact of the model parameters on the amount of fading (AoF) metric.
\end{itemize}



The remainder of this paper is organized as follows: preliminaries and channel models are described in Section~\ref{sec2}. In Section~\ref{sec3}, analytical expressions are derived for the main statistics of the MFTR model. {In section \ref{secVal}, we introduce the empirical validation of our fading model
by fitting the MFTR model to a small-scale fading measurements campaign in outdoor THz wireless channels.} Then, in Section~\ref{sec5}, performance analysis over MFTR fading channels is exemplified. 
Section~\ref{sec6} shows illustrative numerical results and discussions. Finally, concluding remarks are provided in Section~\ref{sec7}.

\textit{Notation}: In what follows, $f_{(\cdot)}(\cdot)$ and $F_{(\cdot)}(\cdot)$ denote the PDF and CDF, respectively; $\mathbb{E}\left \{ \cdot  \right \}$ and $\mathbb{V}\left \{ \cdot  \right \}$ are the expectation and variance operators; $\Pr\left \{ \cdot  \right \}$ represents probability; $\abs{\cdot}$ is the absolute value, $\simeq$ refers to ``asymptotically equal~to'', $\equiv$ reads as ``is equivalent~to'', $\approx$ denotes ``approximately equal~to'' {and $\sim$ refers to ``statistically distributed as''}. 
In addition, $\Gamma(\cdot)$ denotes the gamma function~\cite[Eq.~(6.1.1)]{Abramowitz}, $\gamma(\cdot,\cdot)$ is the lower incomplete gamma function~\cite[Eq.~(6.5.2)]{Abramowitz}, $\left ( a \right )_ n =\tfrac{\Gamma\left ( a+n \right )}{\Gamma\left ( a \right )}$ represents the Pochhammer's symbol \cite{Abramowitz}, { $\bold{U}\left ( \cdot,\cdot,\cdot \right )$ is the confluent hypergeometric
Kummer function \cite[Eq.~(9.211)]{Gradshteyn}}, ${}_2F_1\left(\cdot,\cdot;\cdot;\cdot\right)$ is the Gauss hypergeometric function~\cite[Eq.~(15.1.1)]{Abramowitz}, $P_\alpha(z)={}_2F_1\left(-\alpha,\alpha+1;1;\tfrac{1-z}{2}\right)$ is the Legendre function of the first kind of real degree
$\alpha$~\cite[Eq.~(8.1.2)]{Abramowitz}, $F_D^{(4)}\left(\cdot,\cdot,\cdot,\cdot,\cdot;\cdot;\cdot,\cdot,\cdot,\cdot \right)$ denotes the Lauricella hypergeometric function in four variables~\cite[Eq.~(4), p. 33]{Srivastava}, $\Phi_2^{(4)}\left(\cdot,\cdot,\cdot,\cdot;\cdot;\cdot,\cdot,\cdot,\cdot \right)$ denotes the confluent hypergeometric function in four variables~\cite[Eq.~(8), p. 34]{Srivastava}, and $\Phi_2\left(\cdot,\cdot;\cdot;\cdot,\cdot \right)$ is the bivariate confluent hypergeometric function~\cite[Eq.~(4.19)]{Brychkov}.

\section{Preliminaries and Channel Models}\label{sec2}
According to~\cite{durgin}, the small-scale random fluctuations of a radio signal transmitted over a wireless channel can be structured at the receiver as a superposition of $N$ waves arising from dominant (specular) components, {plus a group of multipath} waves associated to diffuse scattering. Therefore, under this model, the received complex baseband signal representing the wireless channel can be expressed as
\begin{align}
\label{eq1}  
V_r =\sum_{n=1}^{N} V_n \exp\left({j\phi_n}\right)+X_1+jY_1 ,
\end{align}
where $V_n \exp(j\phi_n)$ denotes the $n$th specular component with constant amplitude $V_n$ and uniformly distributed random phase $\phi_n \sim \mathcal{U}(0, 2\pi)$. 
On the other hand, $X_1+jY_1$ is a {circularly-symmetric} complex Gaussian random variable (RV) with total power $2\sigma^2$, such that $X_1,Y_1 \sim \mathcal{N}(0,\sigma^2)$, representing the scattering components associated to the NLOS propagation. This model allows to individually account for a number of dominant waves, together with the application of the CLT to the diffuse component, where a sufficiently large number {of weak diffuse waves with independent phases is assumed.}

The FTR channel model was proposed in~\cite{ftr} as a generalization of the TWDP model \cite{durgin}, where the latter arises when considering two dominant specular components (i.e., $N=2$) in \eqref{eq1}. For its definition, the FTR model considers that the amplitudes of these two dominant components experience a joint fluctuation due to natural situations in different wireless scenarios (e.g., human body shadowing do to user motion, electromagnetic disturbances, and many others). Based on this, the complex signal under FTR fading can be formulated {as}~\cite{ftr}
\begin{align}
\label{eq2}  
V_{\rm{FTR}} =\sqrt{\zeta } V_1 \exp\left({j\phi_{{1}}}\right)+\sqrt{\zeta } V_2 \exp\left({j\phi_2}\right)+X_1+jY_1 ,
\end{align}
where $\zeta $ is a unit-mean Gamma distributed RV
whose PDF is given by
\begin{align}\label{eq3}
f_{\zeta}(x)=\frac{m^{m}x^{m-1}}{\Gamma(m)}\exp\left ( -m x \right ),
\end{align}
where $m$ denotes the shadowing severity index of the specular components.
Note that when $m\rightarrow\infty$, then $\zeta$ degenerates into a deterministic value and the amplitudes of the two dominant specular components in~\eqref{eq2} become constant. The FTR model in~\eqref{eq2}, besides fitting well with the field measurements in different wireless scenarios, also encompasses important statistical wireless channel models as particular cases. For instance, when no specular components are present in~\eqref{eq2}, i.e., $N=0$, the classical Rayleigh fading model arises. For a single LOS component, i.e., $N=1$, two fading models are obtained, namely, Rician and Rician shadowed~\cite{Abdi2003} for constant and fluctuating amplitudes, respectively. Finally, for the case in which there {are} two dominant components (i.e., $N=2$) with constant amplitudes,~\eqref{eq2} reduces to the TWDP fading model, also referred to as the Generalized Two-Ray fading model with uniformly distributed phases (GTR-U)~\cite{Rao2015}.

On the other hand, power-envelope based formulations as those originally proposed by Yacoub \cite{kappamuYacub} are defined from a different approach. Specifically, the squared amplitude (or instantaneous received power) of the $\kappa$-$\mu$ shadowed fading model is expressed as \cite{kms}
\begin{equation}
\label{eqclusters}
    R^2=\sum_{i=1}^{\mu}|Z_i+\sqrt{\zeta} p_i|^2,
\end{equation}
where $\mu$ wave clusters are defined, the complex variables $Z_i$ denote the diffuse components associated to each cluster, $\zeta$ is a unit-mean Gamma distributed RV, as given in \eqref{eq3}, and $p_i$ are complex amplitudes for the dominant components within each cluster. Notice that the FTR model in \eqref{eq2} can be physically interpreted as a single cluster in which both the specular and the diffuse components are part of the same cluster structure. With this in mind, we can combine a power-envelope definition as the one in \eqref{eqclusters} with the ray-based structure in \eqref{eq2} as follows.

As in \cite{kappamuYacub}, we consider a wireless signal composed of clusters of waves propagating in a non-homogeneous environment. Within each cluster, the scattered waves have random phases and similar delay times, while the intercluster delay-time spreads are assumed to be relatively large. All clusters of the multipath waves are assumed to have scattered waves with identical powers. Now, in the first cluster (which typically represents the first one arriving), two dominant specular components with random phases and arbitrary power are considered, as in \eqref{eq2}, whereas in the rest of the clusters a specular component may also be present. Similarly to the models in \cite{ftr} and \cite{kms}, a dominant components are subject to the same source of random fluctuations. 
Under this channel model, the squared amplitude of the received signal is expressed as 
\begin{align}
\label{eq4}  
 R^2=\abs{ \underset{\rm{cluster \   1:}V_{\rm{FTR}} }{\underbrace{  \sqrt{\zeta }\left (V_1e^{j\phi_1}+ V_2e^{j\phi_2} \right )+Z_1}}}^2 + \underset{\rm{ \textcolor{black}{additional \ clusters } } }{\underbrace{ \sum_{i=2}^{\mu}\abs{\sqrt{\zeta } U_ie^{j\varphi_i}+Z_i}^2},}
\end{align}
 where $Z_i=X_i+jY_i$, for $i=\left \{1,\ldots,\mu  \right \}$ in which $X_i$ and $Y_i$ are mutually independent zero-mean Gaussian processes with $\sigma^2$ variance, i.e., $\mathbb{E}\{X_i^2\}=\mathbb{E}\{Y_i^2\}=\sigma^2$. 
In cluster 1, the complex RVs represented by  $V_n \exp(j\phi_n)$, for $n=\left \{1,2  \right \}$ denote the dominant specular components of the first arriving cluster, 
whereas $U_i \exp(j\varphi_i)$ denotes the specular component of the $i$th cluster with constant amplitude $U_i$ and uniformly distributed random phase $\varphi_i \sim \mathcal{U}(0, 2\pi)$.
All the clusters are subject to the same source of random fluctuations as in the FTR and \mbox{$\kappa$-$\mu$} shadowed models, denoted by the normalized RV $\zeta$. For the model in~\eqref{eq4}, we coin the name multicluster FTR (MFTR) model, to indicate the presence of additional multipath clusters in the original FTR model. 

Because of the way it has been defined, the MFTR model defined in \eqref{eq4} includes \emph{both} the FTR and the $\kappa$-$\mu$ shadowed models as special cases, which have independently been validated, empirically  matching  different wireless scenarios \cite{ftr,kms}, which guarantees the practical usefulness of the proposed model.

As in \cite{kappamuYacub,kms}, even though the cluster number $\mu$ is inherently a natural number, the MFTR model admits a generalization for $\mu\in\mathbb{R}^+$, which is considered in the subsequent derivations.

\section{Statistical Characterization of the {MFTR Fading Model}}\label{sec3}
In this section, the key first-order statistics of the newly proposed MFTR model are derived: the {MGF, PDF and CDF}. In the following derivations, let us consider {the} received power signal of the MFTR model, i.e., $W=R^2$, which {from~\eqref{eq4}} can be rewritten as  
\begin{align}
\label{eq5}  
 W&={{\left( \sqrt{\zeta }\left( {{V}_{1}}\cos {{\phi }_{1}}+{{V}_{2}}\cos {{\phi }_{2}} \right)+{{X}_{1}} \right)}^{2}} \nonumber \\ &+{{\left( \sqrt{\zeta }\left( {{V}_{1}}\sin {{\phi }_{1}}+{{V}_{2}}\sin {{\phi }_{2}} \right)+{{Y}_{1}} \right)}^{2}} 
  \nonumber\\&+ {
 \sum_{i=2}^{\mu} \left[\left( {\sqrt \zeta  U_i \cos \varphi _i  + X_i } \right)^2  + \left( {\sqrt \zeta  U_i sen\varphi _i  + Y_i } \right)^2\right].
 }
\end{align}
As in \cite{ftr}, the MFTR model can be conveniently expressed by introducing the parameters $K$ and $\Delta$, which are respectively defined as
\begin{align}
\label{eq6}
    K= \frac{V_{1}^{2}+V_{2}^{2}+\sum\limits_{i = 2}^\mu  U^2_i}{2{{\sigma }^{2}}\mu },
\end{align}
\begin{align}
\label{eq7}
    \Delta = \frac{2{{V}_{1}}{{V}_{2}}}{V_{1}^{2}+V_{2}^{2}+\sum\limits_{i = 2}^\mu  U^2_i}.
\end{align}
The MFTR fading model is univocally defined by four shape parameters: $\{K,m,\mu\}\in\mathbb{R}^+$ and $\Delta\in[0,1]$. Similar to the interpretation of the Rician $K$ factor, $K$ represents the ratio of the average power of the specular components to the power of the remaining scattered components. On the other hand, the $\Delta$ parameter ranging from 0 to 1 shows how similar to each other are the average received powers of the dominant components in cluster 1 and how much power is allocated to the specular components of the rest of the clusters. Thus, when only one specular component is present in cluster 1 ($V_1$ or $V_2 = 0$) then $\Delta = 0$, and the MFTR model collapses to the $\kappa$-$\mu$ shadowed model. When the magnitudes of the two specular {components} of the first cluster are equal ($V_1 =V_2$) and the remaining clusters lack specular components ($U_i=0$ for $i=\left \{2,\ldots,\mu  \right \}$) then $\Delta= 1$. Furthermore, notice that for $\mu=1$ {the MFTR model} yields the FTR fading model and the definitions of $K$ and $\Delta$ of both models coincide.

Next, we introduce the distribution of the received power signal (or equivalently the instantaneous signal-to-noise ratio (SNR) when the noise comes into play) of the MFTR fading model. It is worth mentioning that the statistical characterization of the instantaneous received SNR, here denoted by $\gamma$, plays a pivotal role in designing or evaluating the performance of many practical wireless systems. {Let $\gamma \triangleq (E_s / N_0) W$ be the instantaneous received SNR through  the MFTR fading channel, with $E_s$ and $ N_0$ representing, respectively, the energy density per symbol and the power spectral density. 
\textcolor{black}{
Mathematically speaking, {from \eqref{eq5}}, $\overline{\gamma}$ can be formulated as
\begin{align}
\label{eq8}
   \overline{\gamma} =&\frac{E_s}{N_0} \mathbb{E}\left \{ W\right \}
	\nonumber \\ &{=\frac{E_s}{N_0}\Biggr( \mathbb{E}\left \{ \abs{ {  \sqrt{\zeta }\left (V_1e^{j\phi_1}+ V_2e^{j\phi_2} \right )+X_1+jY_1}}^2 \right \} }
   \nonumber \\ & {+ \sum_{i=2}^{\mu}\mathbb{E}\left \{\abs{\sqrt{\zeta }U_ie^{j\varphi_i}+X_i+jY_i}^2 \right \}\Biggr)} 
   \nonumber \\ =& {\frac{E_s}{N_0} \left ( V_1^2+V_2^2+2\sigma^2+\sum_{i=2}^{\mu}\left(U^2_i+2\sigma^2\right)\right )} 
   \nonumber \\ =& \frac{E_s}{N_0}\left(V_1^2+V_2^2+\sum_{i=2}^{\mu}U^2_i + 2\sigma^2\mu\right)
   \nonumber \\ { \stackrel{(a)}{=}}&   \frac{E_s}{N_0} 2\sigma^2\mu(1+K).
\end{align}}
{where in step (a), we employ \eqref{eq6} with the respective substitutions.}
With the aid of the previous definitions, the chief probability functions concerning the MFTR channel model can be derived as follows.

\subsection{MGF}

{In the first Lemma, presented below, we obtain a closed-form expression for the MGF.}

\begin{lemma}\label{lemma1}
The {MGF} of the instantaneous received SNR $\gamma$ {under MFTR fading} can be expressed as
\begin{align}
\label{eq9}
   {{{\mathcal{M}}}_{\gamma }}\left( s \right)=&\frac{{{m}^{m}}{{\mu }^{\mu }}{{\left( 1+K \right)}^{\mu }}{{\left( \mu \left( 1+K \right)-\overline{\gamma }s \right)}^{m-\mu }}}{{{\left( \sqrt{\mathcal{R}(\mu,m,K,\Delta;s)} \right)}^{m}}} \nonumber \\ & \times 
   {{P}_{m-1}}\left( \frac{m\mu \left( 1+K \right)-\left( \mu K+m \right)\overline{\gamma }s}{\sqrt{\mathcal{R}(\mu,m,K,\Delta;s)}} \right)
\end{align}
\end{lemma}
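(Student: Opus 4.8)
The plan is to evaluate $\mathcal{M}_\gamma(s)=\mathbb{E}\{e^{s\gamma}\}$ through a sequence of nested conditional expectations: first remove the diffuse Gaussian components, then the specular phases, and finally the common fluctuation $\zeta$. Conditioning on $\zeta$ and on all phases $\{\phi_1,\phi_2,\varphi_2,\ldots,\varphi_\mu\}$, the received power in \eqref{eq5} is a sum of $\mu$ mutually independent non-central chi-square variates with two degrees of freedom, one contributed by each cluster, since within every cluster the diffuse term $Z_i$ is a complex Gaussian of power $2\sigma^2$ displaced by a deterministic (conditioned) mean. Using the well-known MGF of a non-central chi-square and the independence of the diffuse components across clusters, the conditional MGF factorizes as $(1-2\sigma^2(E_s/N_0)s)^{-\mu}$ multiplied by an exponential whose argument is proportional to the aggregate specular power $\zeta\big(V_1^2+V_2^2+2V_1V_2\cos(\phi_1-\phi_2)+\sum_{i=2}^{\mu}U_i^2\big)$.

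Second, I would average over the phases. Only the difference $\phi_1-\phi_2$ enters, and since it is uniform on $[0,2\pi)$, the expectation of $\exp(\kappa\cos(\phi_1-\phi_2))$ collapses to the modified Bessel function $I_0(\kappa)$ via its integral representation. The resulting $\zeta$-conditional MGF then has the form $(\text{prefactor})\times e^{\alpha\zeta}I_0(\beta\zeta)$, with $\alpha$ proportional to $s(V_1^2+V_2^2+\sum_{i=2}^\mu U_i^2)$ and $\beta$ proportional to $2sV_1V_2$, so that $\beta=\alpha\Delta$. The last averaging step integrates this against the Gamma density \eqref{eq3}, producing $\int_0^\infty x^{m-1}e^{-(m-\alpha)x}I_0(\beta x)\,dx$; this is evaluated in closed form using the Laplace-type identity $\int_0^\infty x^{m-1}e^{-px}I_0(cx)\,dx=\Gamma(m)(p^2-c^2)^{-m/2}P_{m-1}\!\big(p/\sqrt{p^2-c^2}\big)$, which is precisely what brings the Legendre function $P_{m-1}$ into the final expression.

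Finally, I would insert the definitions of $K$ in \eqref{eq6}, $\Delta$ in \eqref{eq7} and $\overline{\gamma}$ in \eqref{eq8}, using $V_1^2+V_2^2+\sum_{i=2}^{\mu}U_i^2=2\sigma^2\mu K$ and $2V_1V_2=2\sigma^2\mu K\Delta$, and simplify. After clearing the common factor $(\mu(1+K)-\overline{\gamma}s)$ between numerator, denominator and the Legendre argument, the result matches \eqref{eq9} with $\mathcal{R}(\mu,m,K,\Delta;s)=\big(m\mu(1+K)-(\mu K+m)\overline{\gamma}s\big)^2-\big(\mu K\Delta\,\overline{\gamma}s\big)^2$. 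The main obstacle is the Laplace--Legendre integral identity in the $\zeta$-averaging step, since the entire closed-form character of the MGF hinges on recognizing that integral; a secondary difficulty is the algebraic bookkeeping required to pass from the physical quantities $\{V_1,V_2,U_i,\sigma^2\}$ to the four shape parameters $\{K,m,\mu,\Delta\}$ and to fold in the SNR scaling $\gamma=(E_s/N_0)W$.
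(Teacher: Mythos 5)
Your proposal is correct and reaches \eqref{eq9} by a route that is genuinely different in organization from the paper's, though the two are ultimately Fubini-swapped versions of the same computation. The paper conditions on the phase difference $\theta=\phi_2-\phi_1$ \emph{first}, observes that the conditional model is exactly a $\kappa$-$\mu$ shadowed variate with $\kappa_\theta=K(1+\Delta\cos\theta)$ and $(1+\kappa_\theta)/\overline{\gamma}_\theta=(1+K)/\overline{\gamma}$, imports the known $\kappa$-$\mu$ shadowed MGF (which already has the Gaussians and $\zeta$ integrated out), and performs the $\theta$-average \emph{last}, evaluating $\int_0^{\pi}\bigl(a(s)+b(s)\cos\theta\bigr)^{-m}d\theta$ via~\cite[Eq.~(3.661.4)]{Gradshteyn} to produce the Legendre function. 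You instead integrate out the Gaussians first (non-central chi-square MGFs), then the phase (yielding $I_0$), and leave $\zeta$ for last, invoking the Laplace--Legendre identity $\int_0^\infty x^{m-1}e^{-px}I_0(cx)\,dx=\Gamma(m)(p^2-c^2)^{-m/2}P_{m-1}\bigl(p/\sqrt{p^2-c^2}\bigr)$; since $I_0(cx)=\tfrac1\pi\int_0^\pi e^{cx\cos\theta}d\theta$, this identity is precisely the paper's angular integral read in the other order. Your algebra in the final substitution step checks out: $m-\alpha'$ and $\beta'=\alpha'\Delta$ reproduce the numerator $m\mu(1+K)-(\mu K+m)\overline{\gamma}s$ and the polynomial $\mathcal{R}$ of \eqref{eq10}, and the factor $(1-2\sigma^2(E_s/N_0)s)^{-\mu}=\bigl(\mu(1+K)/(\mu(1+K)-\overline{\gamma}s)\bigr)^{\mu}$ supplies the $(\mu(1+K)-\overline{\gamma}s)^{m-\mu}$ exponent. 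What your route buys is a self-contained, first-principles derivation in the spirit of the original FTR paper that does not presuppose the $\kappa$-$\mu$ shadowed MGF; what the paper's route buys is the conditional-on-$\theta$ identification with the $\kappa$-$\mu$ shadowed model, which is then reused structurally throughout (Lemmas~\ref{lemma4} and~\ref{lemma3}, the asymptotic CDF, and the AoF), so it is more than a computational device there. One minor caveat: your chi-square decomposition presumes integer $\mu$, with real $\mu>0$ then admitted by analytic continuation of the final formula --- but the paper makes exactly the same convention, so this is not a gap.
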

where $\mathcal{R}(\mu,m,K,\Delta;s)$ is a polynomial in $s$ given by 
\begin{align}
\label{eq10}
  \mathcal{R}(\mu,m,K,\Delta;s)&=\left[ {{\left( m+\mu K \right)}^{2}}-{{\left( \mu K\Delta  \right)}^{2}} \right]{{\overline{\gamma }}^{2}}{{s}^{2}} -2m\mu \nonumber \\ & \times
 \left( 1+K \right) \left( m+\mu K \right)\overline{\gamma }s+{{\left[ \left( 1+K \right)m\mu  \right]}^{2}}.
\end{align}
\begin{proof}
	See Appendix~\ref{ap:MGF}.
\end{proof}
{Note that the result in Lemma \ref{lemma1} is valid for any positive real value of $m$.}

\subsection{PDF and CDF}
Here, we derive the PDF and CDF of the SNR of the MFTR model\footnote{The PDF and CDF of the received signal envelope $R$ under MFTR channels can be obtained through a standard variable transformation, yielding $f_R(r)=2r f_\gamma(r^2)$ and $F_R(r)=F_\gamma(r^2)$, with $\overline{ \gamma}$ replaced by $\Omega=\mathbb{E}\left \{ R^2  \right \}$.}. Even though these can be computed by performing a numerical Laplace inverse transform over the MGF in Lemma \ref{lemma1} for any arbitrary set of values of the shape parameters \cite{Simon}, it is possible to obtain closed-form expressions by assuming that the fading parameter $m$ takes integer values. For this purpose, we take advantage of the fact that, for $m\in \mathbb{Z}^+$, the Legendre function in the MGF obtained in (\ref{eq9}) has an integer degree;
{thus, such Legendre function becomes a Legendre polynomial}. The Legendre polynomial of an integer degree $n$ can be formulated as in~\cite[Eq.~(22.3.8)]{Abramowitz} by
\begin{align}
\label{eq11}
P_n(z)=\frac{1}{2^n}\sum_{q=0}^{\left \lfloor n/2 \right \rfloor}(-1)^nC_q^nz^{n-2q},
\end{align}
where the $C_q^n$ term is expressed as
\begin{align}
\label{eq12}
C_q^n=\binom{n}{q}\binom{2n-2q}{n}=\frac{\left ( 2n-2q \right )!}{q!(n-q)!(n-2q)!}.
\end{align}
{From the MGF in~\eqref{eq9} and with the help of~\eqref{eq11}, the closed-form expressions for the PDF and CDF of the RV $\gamma$ are obtained in the following Lemma.}
\begin{lemma}\label{lemma2}
Assuming that $m\in \mathbb{Z}^+$, the PDF and CDF of the SNR under MFTR fading can be formulated as \eqref{eq13} and \eqref{eq14}, respectively. 
\end{lemma}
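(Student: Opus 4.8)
The plan is to invert the Laplace transform of the MGF obtained in Lemma~\ref{lemma1}, exploiting the fact that for $m\in\mathbb{Z}^+$ the Legendre function $P_{m-1}(\cdot)$ collapses to the finite-sum Legendre polynomial~\eqref{eq11}. The MGF $\mathcal{M}_\gamma(s)$ is essentially the moment generating function $\mathbb{E}\{e^{s\gamma}\}$ (or its two-sided Laplace counterpart), so the PDF is recovered by $f_\gamma(x)=\mathcal{L}^{-1}\{\mathcal{M}_\gamma(-s)\}$, and the CDF follows either by integrating the PDF or, more efficiently, by dividing the transform by $s$ before inverting. The goal is to reduce everything to inverse transforms of rational-times-power expressions in $s$ that can be matched against known tabulated pairs.

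First I would substitute~\eqref{eq11} into~\eqref{eq9}, writing $P_{m-1}(z)$ with $z$ equal to the argument $\bigl(m\mu(1+K)-(\mu K+m)\overline{\gamma}s\bigr)/\sqrt{\mathcal{R}}$. The crucial simplification is that the powers $z^{m-1-2q}$ each carry a factor $(\sqrt{\mathcal{R}})^{-(m-1-2q)}$, which combines with the prefactor $(\sqrt{\mathcal{R}})^{-m}$ to leave only \emph{integer} powers of $\mathcal{R}$ in the denominator. Since $\mathcal{R}(\mu,m,K,\Delta;s)$ is a quadratic in $s$, I would factor it as $\mathcal{R}=A(s-s_1)(s-s_2)$, where $s_1,s_2$ are its (real) roots expressed through the discriminant; the discriminant involves $\Delta$ and will naturally produce the $\sqrt{1-\Delta^2}$-type terms characteristic of two-ray models. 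After this factorization, each term of the Legendre sum becomes a ratio of a polynomial in $s$ over $(s-s_1)^{a}(s-s_2)^{b}$ for integer exponents, i.e.\ a proper rational function amenable to partial-fraction expansion.

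Next I would perform a partial-fraction decomposition over the two poles $s_1,s_2$, collecting the contributions from all $q$-terms. Each resulting elementary fraction $1/(s-s_\ell)^{k}$ inverts to an exponential-times-power term $x^{k-1}e^{s_\ell x}$, so the PDF emerges as a finite weighted sum of Gamma-type (Erlang) kernels; equivalently, the aggregate can be repackaged compactly using a multivariate hypergeometric function such as the bivariate confluent $\Phi_2$ (already introduced in the Notation), which is the natural closed form for inverse Laplace transforms with two distinct poles and polynomial numerators. For the CDF I would either integrate these exponential-power terms termwise (yielding lower incomplete gamma functions $\gamma(\cdot,\cdot)$, also pre-declared) or apply the $1/s$-division trick before inversion. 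The algebra should be organized so that the final expressions match the displayed equations~\eqref{eq13} and~\eqref{eq14}.

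The hard part will be the bookkeeping of the partial-fraction coefficients across the double index $(q,\text{pole})$ while keeping the answer in a \emph{compact} closed form rather than an unwieldy quadruple sum. In particular, resumming the $q$-indexed Legendre contributions into a single special function (the multivariate $\Phi_2$ or Lauricella $F_D$) is where care is needed: one must recognize the summation structure as the series definition of the target function and verify the parameter identifications (the arguments built from $s_1,s_2$, $K$, $\mu$, $m$, and $\Delta$). Establishing this resummation—and confirming convergence and validity over the full parameter ranges $\{K,m,\mu\}\in\mathbb{R}^+$, $\Delta\in[0,1]$—is the main technical obstacle, whereas the individual inverse-transform steps are routine table lookups.
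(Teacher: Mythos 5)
Your setup matches the paper's: substitute the Legendre polynomial \eqref{eq11} into the MGF \eqref{eq9} and factor the quadratic $\mathcal{R}(\mu,m,K,\Delta;s)$ into its two linear factors (the paper does exactly this, producing the roots associated with $m+\mu K(1\pm\Delta)$). However, your central simplification is arithmetically wrong, and the inversion strategy built on it would fail. Combining the prefactor $(\sqrt{\mathcal{R}})^{-m}$ with the factor $(\sqrt{\mathcal{R}})^{-(m-1-2q)}$ coming from $z^{m-1-2q}$ gives $\mathcal{R}^{-(m-q-1/2)}$, i.e.\ a \emph{half-integer} power of $\mathcal{R}$, not an integer one. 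Consequently each term of the $q$-sum is not a proper rational function: it is $s^{-\mu}$ times a product of four binomial factors with exponents $m-1-2q$ (integer, in the numerator), $\tfrac12+q-m$ twice (half-integers, from the two roots of $\mathcal{R}$), and $m-\mu$ (generally non-integer, from the $(\mu(1+K)-\overline{\gamma}s)^{m-\mu}$ factor you do not account for). A partial-fraction expansion into $1/(s-s_\ell)^{k}$ with integer $k$, and hence a finite mixture of Erlang kernels, is therefore not available; the PDF genuinely requires a transcendental special function even for integer $m$.

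The paper's proof sidesteps this by recognizing the product form directly: after rewriting each $q$-term as $\tfrac{\Gamma(\mu)}{s^{\mu}}\prod_{j=1}^{4}(1-a_j/s)^{-b_j}$ with the four arguments $a_1,\dots,a_4$, it applies the tabulated inverse Laplace pair \cite[Eq.~(4.24.3)]{Srivastava}, which yields $x^{\mu-1}\Phi_2^{(4)}(\cdot)$ in one step (and the $1/s$ trick for the CDF, as you correctly anticipate). Note also that the target function is the \emph{four}-variable confluent hypergeometric $\Phi_2^{(4)}$, not the bivariate $\Phi_2$: your two-pole picture misses both the numerator factor built from $a_1$ and the $a_4$ factor carrying the $m-\mu$ exponent. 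To repair your argument you would need to replace the partial-fraction/resummation step with this direct table lookup (or an equivalent term-by-term series inversion of the four binomial factors), at which point the proof coincides with the paper's.
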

\begin{proof}
	See Appendix~\ref{ap:PFDCDF}.
\end{proof}
The chief statistics of the MFTR fading model in~\eqref{eq13} and~\eqref{eq14} are given in terms of the {multi-variate confluent hypergeometric function} $\Phi_2^{(4)} \left(\cdot \right)$, which is rather common in well-established fading models such as Rician {shadowed} \cite{Abdi2003}, $\kappa$-$\mu$ shadowed \cite{kms}, or FTR \cite{ftr}. Moreover, the computation of this function can be performed by resorting to an inverse Laplace transform as described in~\cite[Appendix~9B]{Simon}, and whose implementation in a simple and efficient way through MATLAB is given in~\cite{phi2function}. Therefore, the evaluation of MFTR probability distributions does not pose any additional challenge compared to other well-known fading models in the state-of-the-art.

\subsection{Alternative formulations}
Expressions in Lemmas \ref{lemma1} and \ref{lemma2} provide a complete formulation for the MFTR, equivalent in complexity to those originally proposed in \cite{ftr} and \cite{kms} for the baseline FTR and $\kappa$-$\mu$ shadowed fading distributions, respectively. However, aiming to provide additional flexibility to the newly proposed MFTR model, as well as to facilitate its use for performance evaluation purposes, we now provide two alternative formulations for the PDF and CDF of the MFTR model.

We first propose a formulation of the MFTR model as a \textit{continuous} mixture of $\kappa$-$\mu$ shadowed distributions. Secondly, 
we propose a formulation of the MFTR model as an infinite \textit{discrete} mixture of Gamma distributions. These formulations are provided in the following two lemmas, and are valid for the entire range of values of the shape parameters $\kappa, \mu, m$ and $\Delta$. 

\begin{figure*}[ht]
	\begin{normalsize}
\begin{align}\label{eq13}
 f_{\gamma }( x)=&\frac{\left ( 1+K \right )^{\mu}\mu ^{\mu}}{2^{m-1}\Gamma(\mu)\overline{\gamma}^\mu}\left ( \frac{m}{\sqrt{\left ( m+\mu K \right )^2-\mu^2 K^2 \Delta^2}} \right )^m\sum_{q=0}^{  \left \lfloor \frac{m-1}{2} \right \rfloor}\left( -1 \right)^q C_q^{m-1} \left ( \frac{m+\mu K}{\sqrt{\left ( m+\mu K \right )^2-\mu^2 K^2 \Delta^2}} \right )^{m-1-2q}  \nonumber \\  \times & \ x^{\mu-1} \Phi_2^{(4)}\Biggl(1+2q-m,m-q-1/2,m-q-1/2,\mu-m;\mu;-\frac{\left( 1+K \right)m\mu }{\left( m+\mu K \right)\overline{\gamma }}x,-\frac{\left( 1+K \right)m\mu }{\left( m+\mu K\left( 1+\Delta  \right) \right)\overline{\gamma }}x, \nonumber \\ &
 -\frac{\left( 1+K \right)m\mu }{\left( m+\mu K\left( 1-\Delta  \right) \right)\overline{\gamma }}x,-\frac{\left( 1+K \right)\mu }{\overline{\gamma }}x \Biggl).
\end{align}
	\end{normalsize}
	\vspace{-5mm}
\end{figure*}
\begin{figure*}[ht]
	\begin{normalsize}
\begin{align}\label{eq14}
 F_{\gamma }( x)=&\frac{ \left ( 1+K \right )^{\mu}\mu ^{\mu}}{2^{m-1}\Gamma(\mu+1) \overline{\gamma}^\mu}\left ( \frac{m}{\sqrt{\left ( m+\mu K \right )^2-\mu^2 K^2 \Delta^2}} \right )^m\sum_{q=0}^{  \left \lfloor \frac{m-1}{2} \right \rfloor}\left( -1 \right)^q C_q^{m-1} \left ( \frac{m+\mu K}{\sqrt{\left ( m+\mu K \right )^2-\mu^2 K^2 \Delta^2}} \right )^{m-1-2q}  \nonumber \\ \times &\ x^{\mu} \Phi_2^{(4)}\Biggl(1+2q-m,m-q-1/2,m-q-1/2,\mu-m;\mu+1;-\frac{\left( 1+K \right)m\mu }{\left( m+\mu K \right)\overline{\gamma }}x,-\frac{\left( 1+K \right)m\mu }{\left( m+\mu K\left( 1+\Delta  \right) \right)\overline{\gamma }}x, \nonumber \\ &
 -\frac{\left( 1+K \right)m\mu }{\left( m+\mu K\left( 1-\Delta  \right) \right)\overline{\gamma }}x,-\frac{\left( 1+K \right)\mu }{\overline{\gamma }}x \Biggl).
\end{align}
	\end{normalsize} 
	\hrulefill
	\vspace{-5mm}
\end{figure*}

\begin{lemma}\label{lemma4}
When $m\in \mathbb{R}^+$, the PDF and CDF of the SNR of the MFTR distribution can be obtained by averaging the conditional $\kappa$-$\mu$ shadowed statistics over all possible
realizations of $\theta$, as
\end{lemma}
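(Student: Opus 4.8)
The plan is to obtain the mixture representation by conditioning on the phase difference between the two dominant rays of the first cluster and then integrating over its law. I would set $\theta \triangleq \phi_1-\phi_2$ and note that, since $\phi_1,\phi_2\sim\mathcal U(0,2\pi)$ are independent, $\theta$ taken modulo $2\pi$ is itself uniform on $[0,2\pi)$, which is all that matters because only $\cos\theta$ will enter. Factoring the combined specular term of cluster~1 as $V_1 e^{j\phi_1}+V_2 e^{j\phi_2}=e^{j\phi_2}\bigl(V_1 e^{j\theta}+V_2\bigr)$, its squared magnitude reduces, once $\theta$ is fixed, to the deterministic quantity $V_1^2+V_2^2+2V_1V_2\cos\theta$.

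Next I would condition \eqref{eq4} on $\theta$. Because the diffuse term $Z_1$ is circularly symmetric, the residual phase $e^{j\phi_2}$ is immaterial, so cluster~1 becomes statistically equivalent to a single fluctuating specular component of power $V_1^2+V_2^2+2V_1V_2\cos\theta$ embedded in $Z_1$. Conditioned on $\theta$, the model in \eqref{eq4} then has exactly the structure of the $\kappa$-$\mu$ shadowed power in \eqref{eqclusters}: $\mu$ clusters, each carrying a single dominant component driven by the common fluctuation $\zeta$, plus i.i.d.\ diffuse components. Since the $\kappa$-$\mu$ shadowed law depends on the dominant components only through their aggregate power, the conditional SNR is $\kappa$-$\mu$ shadowed with the same $\mu$ and $m$ and with the parameter obtained by inserting the cross term $2V_1V_2\cos\theta$ into \eqref{eq6}--\eqref{eq7}, namely $\kappa(\theta)=K(1+\Delta\cos\theta)$. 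I would also record that the scattered power $2\sigma^2\mu$ is untouched by $\theta$, so the exponential decay rate $\mu(1+\kappa(\theta))/\overline{\gamma}(\theta)=\mu(1+K)/\overline{\gamma}$ is preserved for every $\theta$.

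The MFTR PDF and CDF then follow by averaging the conditional $\kappa$-$\mu$ shadowed statistics against the uniform density of $\theta$, reducing the range to $[0,\pi]$ via $\cos(2\pi-\theta)=\cos\theta$. As a consistency check tying the result back to Lemma~\ref{lemma1}, I would integrate the conditional $\kappa$-$\mu$ shadowed MGF over $\theta$: after pulling out the $\theta$-independent factors, the surviving integral takes the form $\tfrac1\pi\int_0^\pi(a-b\cos\theta)^{-m}\,d\theta=(a^2-b^2)^{-m/2}P_{m-1}\!\bigl(a/\sqrt{a^2-b^2}\bigr)$, whose $a^2-b^2$ reproduces precisely the polynomial $\mathcal R(\mu,m,K,\Delta;s)$ of \eqref{eq10} while the Legendre argument matches that of \eqref{eq9}, thus recovering the closed-form MGF.

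The main obstacle I anticipate is the conditioning step, where one must argue that given $\theta$ the law is \emph{exactly} $\kappa$-$\mu$ shadowed. This rests on two facts to be invoked with care: the invariance of $|\,\text{specular}+Z_1|^2$ to the specular phase under the circular symmetry of $Z_1$, and the insensitivity of the $\kappa$-$\mu$ shadowed distribution to how the total dominant power is split among the $\mu$ clusters. Once these are secured, the identification $\kappa(\theta)=K(1+\Delta\cos\theta)$ together with the constancy of the scattered power renders the final averaging a routine step.
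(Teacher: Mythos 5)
Your proposal is correct and follows essentially the same route as the paper: condition on the phase difference of the two specular components of cluster~1, identify the conditional law as $\kappa$-$\mu$ shadowed with $\kappa_\theta=K(1+\Delta\cos\theta)$ and with $\mu(1+\kappa_\theta)/\overline{\gamma}_\theta=\mu(1+K)/\overline{\gamma}$ preserved, then average over the uniform $\theta$ restricted to $[0,\pi]$ by symmetry — which is exactly the machinery the paper builds in Appendix~A (eqs.~\eqref{Lema1eq3}--\eqref{Lema1eq10}) and then invokes in the proof of Lemma~\ref{lemma4}. Your closing consistency check via $\frac{1}{\pi}\int_0^\pi(a-b\cos\theta)^{-m}d\theta$ is precisely the paper's derivation of the MGF in Lemma~\ref{lemma1} using~\cite[Eq.~(3.661.4)]{Gradshteyn}.
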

\begin{align}\label{eq18}
f_\gamma(x)=\frac{1}{\pi}\int_{0}^{\pi}{{f}_{\gamma \left| \theta  \right.}}(x)d\theta,
 \end{align}
\begin{align}\label{eq19}
F_\gamma(x)=\frac{1}{\pi}\int_{0}^{\pi}{{F}_{\gamma \left| \theta  \right.}}(x)d\theta,
 \end{align}
where
\begin{align}
\label{eq20}
    	&{{f}_{\gamma \left| \theta  \right.}}(x)=\frac{{{\mu }^{\mu }}{{m}^{m}}{{\left( 1+K \right)}^{\mu }}}{\Gamma (\mu )\overline{\gamma }{{\left( \mu K\left( 1+\Delta \cos \theta  \right)+m \right)}^{m}}}{{\left( \frac{x}{\overline{\gamma }} \right)}^{\mu -1}} \nonumber \\ & \times {{e}^{-\frac{\mu \left( 1+K \right)}{\overline{\gamma }}x}} 
    	{{}_{1}}{{F}_{1}}\left( m;\mu ;\frac{{{\mu }^{2}}K\left( 1+\Delta \cos \theta  \right)\left( 1+K \right)}{\mu K\left( 1+\Delta \cos \theta  \right)+m}\frac{x}{\overline{\gamma }} \right){,}
\end{align}
\begin{align}
\label{eq21}
    	{{F}_{\gamma \left| \theta  \right.}}(x)=&\frac{{{\mu }^{\mu-1 }}{{m}^{m}}{{\left( 1+K \right)}^{\mu }} }{\Gamma (\mu ){{\left( \mu K\left( 1+\Delta \cos \theta  \right)+m \right)}^{m}}}{{\left( \frac{x}{\overline{\gamma }} \right)}^{\mu}} \Phi_2\Bigl(\mu-m, \nonumber \\ & m;\mu+1;-\frac{\mu\left ( 1+K \right )x}{\overline{\gamma}},-\frac{\mu\left ( 1+K \right )}{\overline{\gamma}}
    	\nonumber \\ & \times 
    	\frac{m x}{\mu K\left (1+\Delta \cos\theta  \right )+m } \Bigl).
\end{align}
\begin{proof}
The conditional $\kappa$-$\mu$ shadowed PDF and CDF are obtained from~\cite[Eq.~(4)]{kms} and~\cite[Eq.~(6)]{kms} by substituting $\kappa$ and $\overline{\gamma}$ by~\eqref{Lema1eq4} and~\eqref{Lema1eq5}, respectively. Then, using the relationships given in~\eqref{Lema1eq7} and~\eqref{Lema1eq10} that connect the $\kappa$-$\mu$ shadowed and MFTR models,~\eqref{eq20} and~\eqref{eq21} are obtained.
\end{proof}

It must be noted that the Rician shadowed distribution is a particular case of the $\kappa$-$\mu$ shadowed distribution for the case when $\mu=1$. Thus, the integral connection between the FTR and the Rician shadowed distributions presented in \cite{ftr2}, for arbitrary positive real $m$, is a particular case of Lemma~\ref{lemma4} for $\mu=1$.

\begin{figure*}[h!]
	\begin{normalsize}
\begin{align}\label{eq15}
 f_{\gamma }( x)=  \sum_{i=0}^{\infty}\textcolor{black}{w_i} f_X^{\rm G} \left(   \mu+i;\frac{\overline{\gamma}(\mu+i)}{\mu(K+1)} ;x\right).
\end{align}
	\end{normalsize}
	\vspace{-5mm}
\end{figure*}
\begin{figure*}[h!]
	\begin{normalsize}
\begin{align}\label{eq16}
 F_{\gamma }( x)=  \sum_{i=0}^{\infty}\textcolor{black}{w_i} F_X^{\rm G}\left(   \mu+i;\frac{\overline{\gamma}(\mu+i)}{\mu(K+1)} ;x\right).
\end{align}
	\end{normalsize}
	\vspace{-5mm}
\end{figure*}

\begin{figure*}[h!]
	\begin{normalsize}
\begin{align}\label{eqWeights}
\textcolor{black}{
w_i=\frac{\Gamma(m+i)(\mu K)^i m^m}{\Gamma(m)\Gamma(i+1)}\frac{{\left( {1 - \Delta } \right)^i }}{{\sqrt \pi  (\mu K(1 - \Delta ) + m)^{m + i} }}}&\textcolor{black}{\sum_{q=0}^{i}\binom{i}{q}\frac{{\Gamma \left( {q + \frac{1}{2}} \right)}}{{\Gamma \left( {q + 1} \right)}} \left( {\frac{{2\Delta }}{{1 - \Delta }}} \right)^q} \nonumber \\  & \textcolor{black}{\times
{}_2F_1\left( {m + i,q + \frac{1}{2};q + 1;\frac{{ - 2\mu K\Delta }}{{\mu K(1 - \Delta ) + m}}} \right),\quad m \in R^+,}
\end{align}
	\end{normalsize}
	\vspace{-5mm}
\end{figure*}

\begin{figure*}[h!]
	\begin{normalsize}
\begin{align}\label{eq17}
 f_X^{\rm G}\left(  \lambda;\nu ;y\right)=\frac{ \lambda^ \lambda}{\Gamma( \lambda) \nu^ \lambda}y^{\lambda-1}\exp\left( - \frac{\lambda y}{\nu} \right), \quad   F_X^{\rm G}\left(  \lambda;\nu ;y\right)=\frac{1}{\Gamma(\lambda)}\gamma\left ( \lambda,\frac{\lambda y}{\nu} \right ).
\end{align}
	\end{normalsize}
	\hrulefill
	\vspace{-5mm}
\end{figure*}

\begin{lemma}\label{lemma3}
{When $m\in \mathbb{R}^+$, the PDF and CDF of the SNR of the MFTR distribution can be obtained as an infinite discrete mixture of Gamma distributions. The corresponding expressions are given in equations \eqref{eq15} and \eqref{eq16}, respectively, where $f_X^{\rm G}(\cdot)$ and $F_X^{\rm G}(\cdot)$ represent the PDF and CDF, respectively, of a Gamma distribution, which are both given in \eqref{eq17}. 
}
\end{lemma}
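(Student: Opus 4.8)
The plan is to derive the discrete Gamma mixture directly from the continuous $\kappa$-$\mu$ shadowed mixture established in Lemma~\ref{lemma4}. Starting from \eqref{eq18} together with the conditional PDF \eqref{eq20}, the idea is to expand the confluent hypergeometric function ${}_1F_1(m;\mu;\cdot)$ appearing in $f_{\gamma|\theta}$ through its defining power series $\sum_{i=0}^{\infty}\frac{(m)_i}{(\mu)_i}\frac{z^i}{i!}$. Each term of index $i$ then carries a factor $x^{\mu+i-1}e^{-\mu(1+K)x/\overline{\gamma}}$, which, up to normalization, is exactly the kernel of a Gamma PDF of shape $\mu+i$ and mean $\overline{\gamma}(\mu+i)/[\mu(K+1)]$, i.e.\ the function $f_X^{\rm G}(\mu+i;\overline{\gamma}(\mu+i)/[\mu(K+1)];x)$ defined in \eqref{eq17}; note the exponential rates match since $(\mu+i)/\nu=\mu(1+K)/\overline{\gamma}$. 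After interchanging the absolutely convergent series with the integral over $\theta$ (justified by Tonelli's theorem, as every summand is nonnegative on the compact interval because $\Delta\in[0,1]$ forces $1+\Delta\cos\theta\geq 0$), the representation \eqref{eq15} emerges, with the coefficient of the $i$th Gamma component collecting all $i$-dependent constants and a residual angular integral.

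First I would isolate this angular integral, which after absorbing the $\theta$-independent prefactors reads
\begin{align}
I_i=\frac{1}{\pi}\int_0^\pi \frac{(1+\Delta\cos\theta)^i}{(\mu K(1+\Delta\cos\theta)+m)^{m+i}}\,d\theta, \nonumber
\end{align}
so that $w_i=\frac{m^m\Gamma(m+i)(\mu K)^i}{\Gamma(m)\Gamma(i+1)}I_i$ once the Pochhammer ratio $(m)_i/(\mu)_i$ is combined with the $\Gamma(\mu+i)$ arising from the Gamma normalization, and the powers of $\mu$, $1+K$ and $\overline{\gamma}$ cancel against those hidden in $f_X^{\rm G}$. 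These cancellations are routine bookkeeping.

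The hard part will be evaluating $I_i$ in the closed form shown in \eqref{eqWeights}. My approach is to use the half-angle identity $1+\Delta\cos\theta=(1-\Delta)+2\Delta\cos^2(\theta/2)$ in both the numerator and the denominator bracket, then expand the numerator by the binomial theorem into the finite sum $\sum_{q=0}^{i}\binom{i}{q}(1-\Delta)^{i-q}(2\Delta)^q\cos^{2q}(\theta/2)$. The substitution $\phi=\theta/2$ reduces each summand to an integral of the form $\int_0^{\pi/2}\cos^{2q}\phi\,(1+t\cos^2\phi)^{-(m+i)}\,d\phi$ with $t=2\mu K\Delta/[\mu K(1-\Delta)+m]$. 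Expanding $(1+t\cos^2\phi)^{-(m+i)}$ via the generalized binomial series and integrating term-by-term using $\int_0^{\pi/2}\cos^{2q+2k}\phi\,d\phi=\tfrac{\sqrt{\pi}}{2}\Gamma(q+k+\tfrac{1}{2})/\Gamma(q+k+1)$ yields, after recognizing $(q+\tfrac{1}{2})_k/(q+1)_k$ in the ratio of Gamma functions, precisely a series summing to ${}_2F_1(m+i,q+\tfrac{1}{2};q+1;-t)$. Collecting the factor $(1-\Delta)^i$ and the Beta-function constants then reproduces the inner finite sum of \eqref{eqWeights}.

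Finally, the CDF expression \eqref{eq16} follows with no additional effort: integrating \eqref{eq15} term-by-term over $[0,x]$ turns each Gamma PDF into its CDF $F_X^{\rm G}$ with the \emph{same} weights $w_i$, legitimate by the same convergence argument. I expect the main technical obstacle to be exclusively the evaluation of the angular integral $I_i$; everything else amounts to a standard interchange of limits and the tracking of multiplicative constants.
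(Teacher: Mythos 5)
Your proposal is correct and follows the same structural route as the paper: condition on the phase difference $\theta$, recognize the conditional law as $\kappa$-$\mu$ shadowed, pass to a Gamma mixture whose component parameters are $\theta$-independent (thanks to $\tfrac{1+\kappa_\theta}{\overline{\gamma}_\theta}=\tfrac{1+K}{\overline{\gamma}}$), and average only the weights, which produces exactly the angular integral $I_i=\tfrac{1}{\pi}\int_0^\pi (1+\Delta\cos\theta)^i\bigl(\mu K(1+\Delta\cos\theta)+m\bigr)^{-(m+i)}d\theta$ appearing in the paper's \eqref{Lema3eq8}. The difference is that the paper imports its two key ingredients by citation --- the Gamma-mixture representation of the $\kappa$-$\mu$ shadowed distribution from \cite{pablo2021} and the closed form of $I_i$ from the appendix of \cite{ftr2} --- whereas you re-derive both: the mixture via the power series of ${}_1F_1(m;\mu;\cdot)$ in \eqref{eq20} (your bookkeeping of the constants is right and reproduces \eqref{Lema3eq8}), and $I_i$ via the half-angle identity, binomial expansion, and term-by-term integration, which indeed lands on \eqref{Lema3eq9}. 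This buys a self-contained proof at the cost of extra computation. One small caveat: your term-by-term expansion of $\left(1+t\cos^2\phi\right)^{-(m+i)}$ with $t=\tfrac{2\mu K\Delta}{\mu K(1-\Delta)+m}$ converges only for $t<1$, which is not guaranteed (e.g.\ $\Delta=1$ and $2\mu K>m$); you should close this by noting that both sides of the resulting identity are analytic in $t$ on the relevant domain (the ${}_2F_1$ at argument $-t$ lies on its principal branch for all $t>0$), so the identity extends by analytic continuation from $0<t<1$.
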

\begin{proof}
	See Appendix~\ref{ap:PFDCDFreal}.
\end{proof}

Here, we point out that the PDFs and CDFs in Lemmas~\ref{lemma4} and~\ref{lemma3} are valid for non-constrained fading values of the MFTR model. Expressions in Lemma~\ref{lemma4}, i.e.,~\eqref{eq18} and~\eqref{eq19}, are given in simple finite-integral form in terms of well-known functions in communication theory, where the integrands are continuous bounded functions and the integration interval is finite. Therefore, the evaluation of these integrals through numerical integration routines in commercial mathematical software packages poses no challenge, and in fact is an standard approach in communication theory -- cfr. the proper integral forms of the Gaussian $Q$-function \cite{Weinstein1974}, or Simon and Alouini's MGF approach to the performance analysis of wireless communication systems \cite{Simon}. Expressions in Lemma~\ref{lemma3}, i.e.,~\eqref{eq15} and~\eqref{eq16}, are given as weighted sums of gamma distributions, which are a basic building block in many communication theory applications, and correspond to the case of assuming Nakagami-$m$ fading. These sets of expressions allow for leveraging the rich literature devoted to study baseline fading models like $\kappa$-$\mu$ shadowed and Nakagami-$m$, to directly evaluate the case of the more general MFTR model, when desired.

\begin{table}[t]    
\footnotesize
  \caption{Conventional and generalized fading channel models derived from the MFTR distribution. }
\centering
\begin{tabular}{c|c}
	\hline \hline
\multicolumn{1}{l|}{	\textbf{Fading Distribution}} & \multicolumn{1}{c}{\textbf{MFTR fading parameters } } 
\\  \hline  \hline  
 & $a)$ $\underline{\Delta}=0$, $\underline{K} \rightarrow \infty$, $\underline{m}=0.5$, $\underline{\mu}=1$ 
\\ One-sided Gaussian       & {$b)$ $\underline{\Delta}=0$, $\underline{K} \rightarrow 0$, $\underline{m}\rightarrow\infty$, $\underline{\mu}=0.5$}
\\        & $c)$ $\underline{\Delta}=1$, $\underline{K} \rightarrow \infty$, $\underline{m}=1$, $\underline{\mu}=1$
\\ \hline   & $a)$$\underline{\Delta}=0$, $\underline{K} \rightarrow \infty$, $\underline{m}=1$, $\underline{\mu}=1$
\\ Rayleigh        &{ $b)$ $\underline{\Delta}=0$, $\underline{K}\rightarrow0$, $\underline{m}\rightarrow\infty$, $\underline{\mu}=1$}
\\         & $c)$ $\underline{\Delta}=0$, $\underline{K} =0$, $\forall \underline{m}$, $\underline{\mu}=1$
  \\ \hline  & $a)$ $\underline{\Delta}=0$, $\underline{K} =\tfrac{1-q^2}{2q^2}$, $\underline{m}=0.5$, $\underline{\mu}=1$
 \\ Nakagami-$q$ (Hoyt)  & $b)$ $\forall \left \{\underline{\Delta},\underline{K}  \right \}$, with $q=\sqrt{\tfrac{1+K(1-\Delta)}{1+K(1+\Delta)}}$, \\   & $\underline{m}=1$, $\underline{\mu}=1$
 \\ \hline Nakagami-$m$ & $a)$  $\underline{\Delta}=0$, $\underline{K} \rightarrow \infty$, $\underline{m}=m$, $\underline{\mu}=1$  \\  & $b)$ {$\underline{\Delta}=0$, $\underline{K} \rightarrow 0$, $\underline{m}\rightarrow \infty$, $\underline{\mu}=m$}
  \\ \hline Rician & $\underline{\Delta}=0$, $\underline{K}=K$, $\underline{m}\rightarrow \infty$, $\underline{\mu}=1$
    \\ \hline Rician shadowed & $\underline{\Delta}=0$, $\underline{K}=K$, $\underline{m}=m$, $\underline{\mu}=1$
     \\ \hline $\kappa$-$\mu$ shadowed & $\underline{\Delta}=0$, $\underline{K}=\kappa$, $\underline{m}=m$, $\underline{\mu}=\mu$
  \\ \hline $\kappa$-$\mu$ & $\underline{\Delta}=0$, $\underline{K}=\kappa$, $\underline{m}=\rightarrow \infty$, $\underline{\mu}=\mu$
  \\ \hline $\eta$-$\mu$
  & $\underline{\Delta}=0$, $\underline{K}=(1-\eta)/(2\eta)$,  $\underline{m}= \mu$, $\underline{\mu}=2 \mu$ 
  \\ \hline TWDP
  & $\underline{\Delta}=\Delta$, $\underline{K}=K$, $\underline{m}=\rightarrow \infty$, $\underline{\mu}=1$  \\ \hline Two-Wave
  & $\underline{\Delta}=\Delta$, $\underline{K}=\rightarrow \infty$, $\underline{m}=\rightarrow \infty$, $\underline{\mu}=1$  \\ \hline FTR
  & $\underline{\Delta}=\Delta$, $\underline{K}=K$, $\underline{m}=m$, $\underline{\mu}=1$  \\ \hline
\multicolumn{1}{c|}{Fluctuating Two-Wave} & \multicolumn{1}{c}{$\underline{\Delta}=\Delta$, $\underline{K}=\rightarrow \infty$, $\underline{m} = m$, $\underline{\mu}=1$} \\ \hline  \hline
\end{tabular}\label{Cases}
\end{table} 

\begin{figure}[t]
\centering 
\psfrag{A}[Bc][Bc][0.7]{\rm Case $\mathrm{A}$}
\psfrag{B}[Bc][Bc][0.7]{\rm Case $\mathrm{B}$}
\includegraphics[width=1\linewidth]{./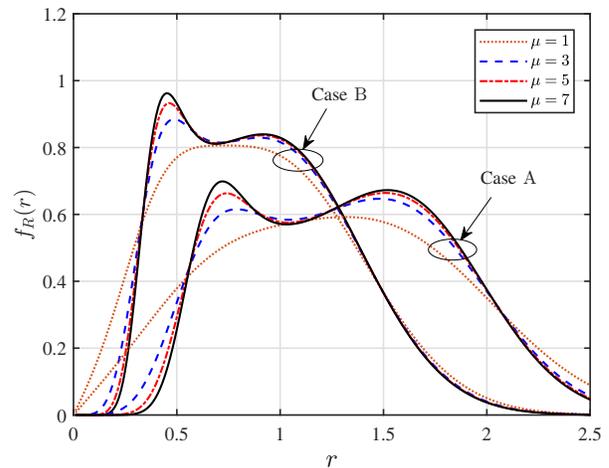} \caption{PDF of the MFTR signal envelope by varying $\mu$ in two different scenarios:{ \text{ Case $\mathrm{A}$}: $\Delta=0.9$, $m=8$, $K=8$, $ \overline{\gamma}=2$ and \text{Case $\mathrm{B}$}: $\Delta=0.9$, $m=4$, $K=15$,  $\overline{\gamma}=1$.} }
\label{fig1}
\vspace{-2mm}
\end{figure}

\begin{figure}[H]
\centering 
\psfrag{H}[Bc][Bc][0.6]{$K_{\mathrm{dB}}^{\mathrm{B}}=25$ dB}
\psfrag{Z}[Bc][Bc][0.6][0]{$K_{\mathrm{dB}}^{\mathrm{B}}=15$ dB}
\includegraphics[width=1\linewidth]{./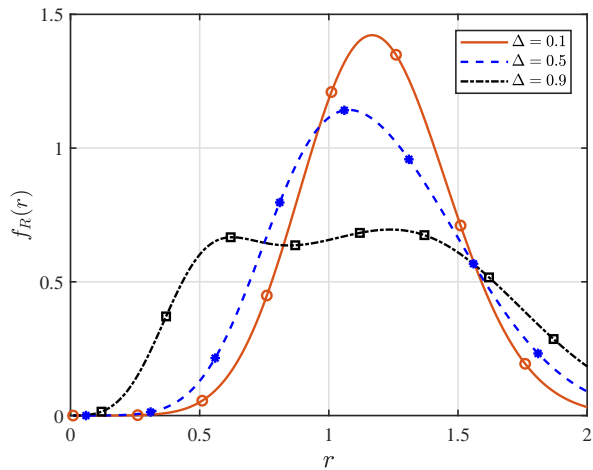} \caption{PDF of the MFTR signal envelope for different values of $\Delta$, with $\mu=2$, $m=6$, $K=15$, and $\overline{\gamma}=1.5$.  \textcolor{black}{Markers correspond to MC simulations.}}
\label{fig2}
\end{figure}

\begin{figure}[t]
\centering 
\psfrag{H}[Bc][Bc][0.6]{$K_{\mathrm{dB}}^{\mathrm{B}}=25$ dB}
\psfrag{Z}[Bc][Bc][0.6][0]{$K_{\mathrm{dB}}^{\mathrm{B}}=15$ dB}
\includegraphics[width=1\linewidth]{./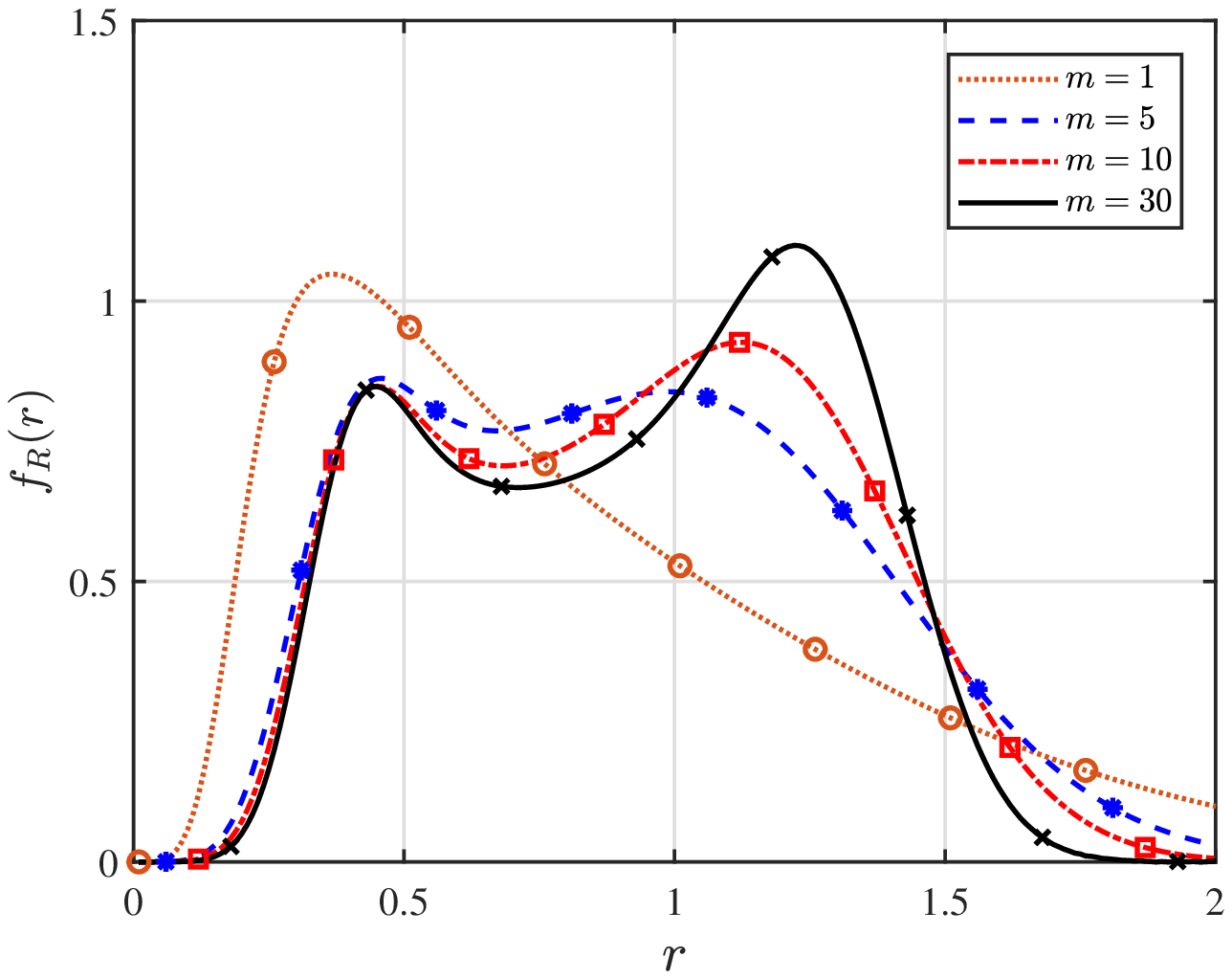} \caption{PDF of the MFTR signal envelope for different values of $m$, with $\Delta=0.9$, $\mu=3$, $m=4$, $K=20$, and $\overline{\gamma}=1$. \textcolor{black}{Markers correspond to MC simulations.}}
\label{fig3}
\end{figure}

\begin{figure}[t]
\centering 
\psfrag{H}[Bc][Bc][0.6]{$K_{\mathrm{dB}}^{\mathrm{B}}=25$ dB}
\psfrag{Z}[Bc][Bc][0.6][0]{$K_{\mathrm{dB}}^{\mathrm{B}}=15$ dB}
\includegraphics[width=1\linewidth]{./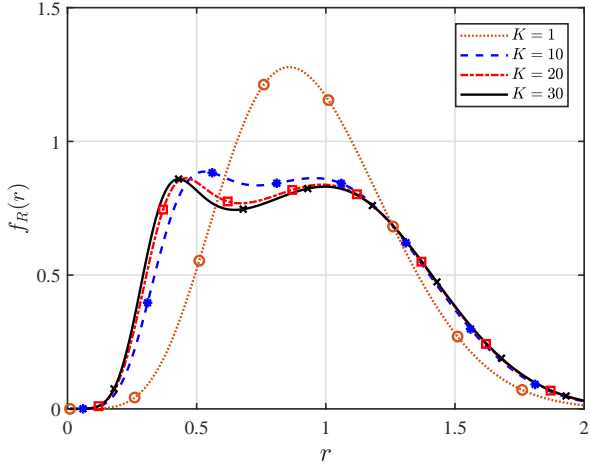} \caption{PDF of the MFTR signal envelope for different values of $K$, with $m=5$, $\Delta=0.9$, $\mu=3$, and $\overline{\gamma}=1$. \textcolor{black}{Markers correspond to MC simulations}}
\label{fig4}
\end{figure}
\subsection{Special cases and effect of parameters}
The MFTR model derived here is connected to other fading distributions commonly used in several wireless application scenarios, by specializing the corresponding set of parameters as stated in Table~\ref{Cases}. In order to avoid confusion, the parameters related to the MFTR distribution are underlined. We would like to mention that a multicluster version of the TWDP model in \cite{durgin,Rao2015} naturally appears as we let $m\rightarrow\infty$; however, this model alone has its own entity an deserves special attention as a separate item \cite{pena2022}.

In Figs.~\ref{fig1}-\ref{fig4}, we illustrate how different propagation conditions affect the shape of the MFTR distribution, by evaluating the PDF of the received signal envelope $f_R(r)$ for a set of values of the shape parameters: $K$, $\Delta$, $m$ and $\mu$. 
The PDFs illustrated in the figures have been obtained by evaluating \eqref{eq13}, although all mathematical expressions along this section have been double-checked through Monte Carlo (MC) simulations, which have been included with markers in the figures, whenever they don't affect readability. Also, we also checked that the same results are obtained when eqs. \eqref{eq18} and \eqref{eq15} are used to evaluate the PDFs.


In Fig.~\ref{fig1}, we clearly perceive that the MFTR model's behavior is inherently bimodal\footnote{The bimodality of the distribution is related to the existence of two local maxima in its PDF.} {(see \textit{Case $\mathrm{A}$})}. Interestingly, thanks to the presence of the $\mu$ parameter adequately combined with the other fading parameters, the MFTR model can exhibit a more pronounced bimodality as $\mu$ increases {(see \textit{Case $\mathrm{B}$})}. Specifically, the MFTR model can exhibit both a left-bimodality (i.e., the first local maximum is larger) and a right-bimodality (i.e., the second local maximum is larger). This is in stark contrast with the behavior of the baseline $\kappa$-$\mu$ shadowed (unimodal) or FTR distributions (only right-bimodality) from which the MFTR distribution originates. This feature brings additional flexibility to improve the versatility of the MFTR model to fit field measurements in emerging wireless scenarios such as mm-Wave and sub-terahertz bands \cite{Du2022}. 

In Fig.~\ref{fig2}, it is confirmed that for low values of $\Delta$, i.e., one specular component is dominant in cluster 1, the MFTR distribution exhibits a unimodal behavior similar to the $\kappa$-$\mu$ shadowed case. Conversely, for larger values of $\Delta$, i.e., two specular components with similar amplitudes are present in cluster 1 which dominate the power of the specular components of the remaining clusters, the MFTR distribution exhibits a bimodal behavior. From Figs.~\ref{fig3}-\ref{fig4}, we can see that such bimodality of the MFTR distribution is also closely linked to parameters $m$ and $K$. Specifically, large values of $m$ or $K$ yield a more pronounced bimodality. Conversely, low values of $m$ or $K$ tend to smoothen such bimodality.

\begin{figure}[t]
\centering
\subfigure[\hspace{-4cm}]{}\hspace{-0.4in}
{\includegraphics[width=0.28\textwidth]
{./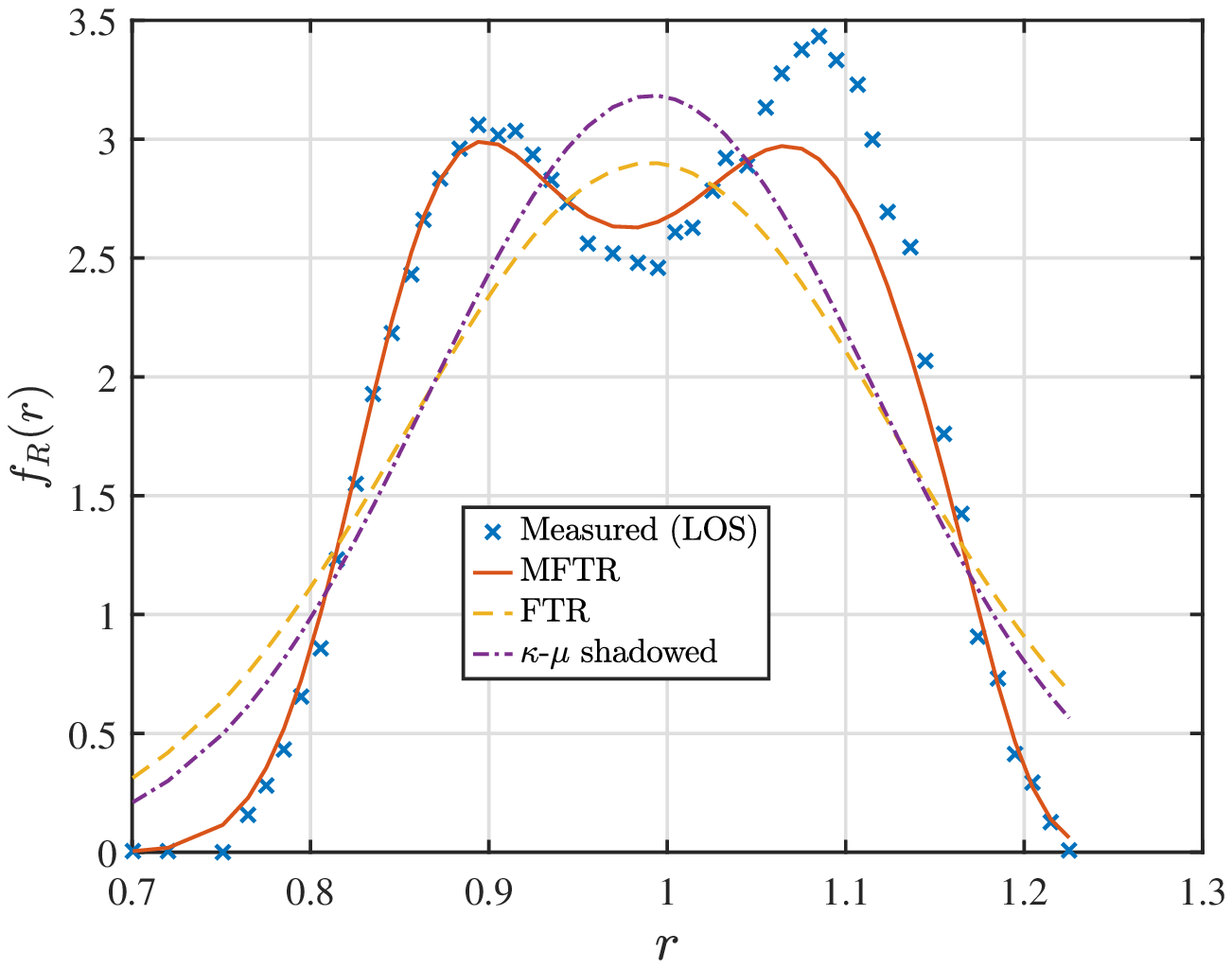}}
\subfigure[\hspace{-4cm}]{}\hspace{-0.2in}{\includegraphics[width=0.28\textwidth]{./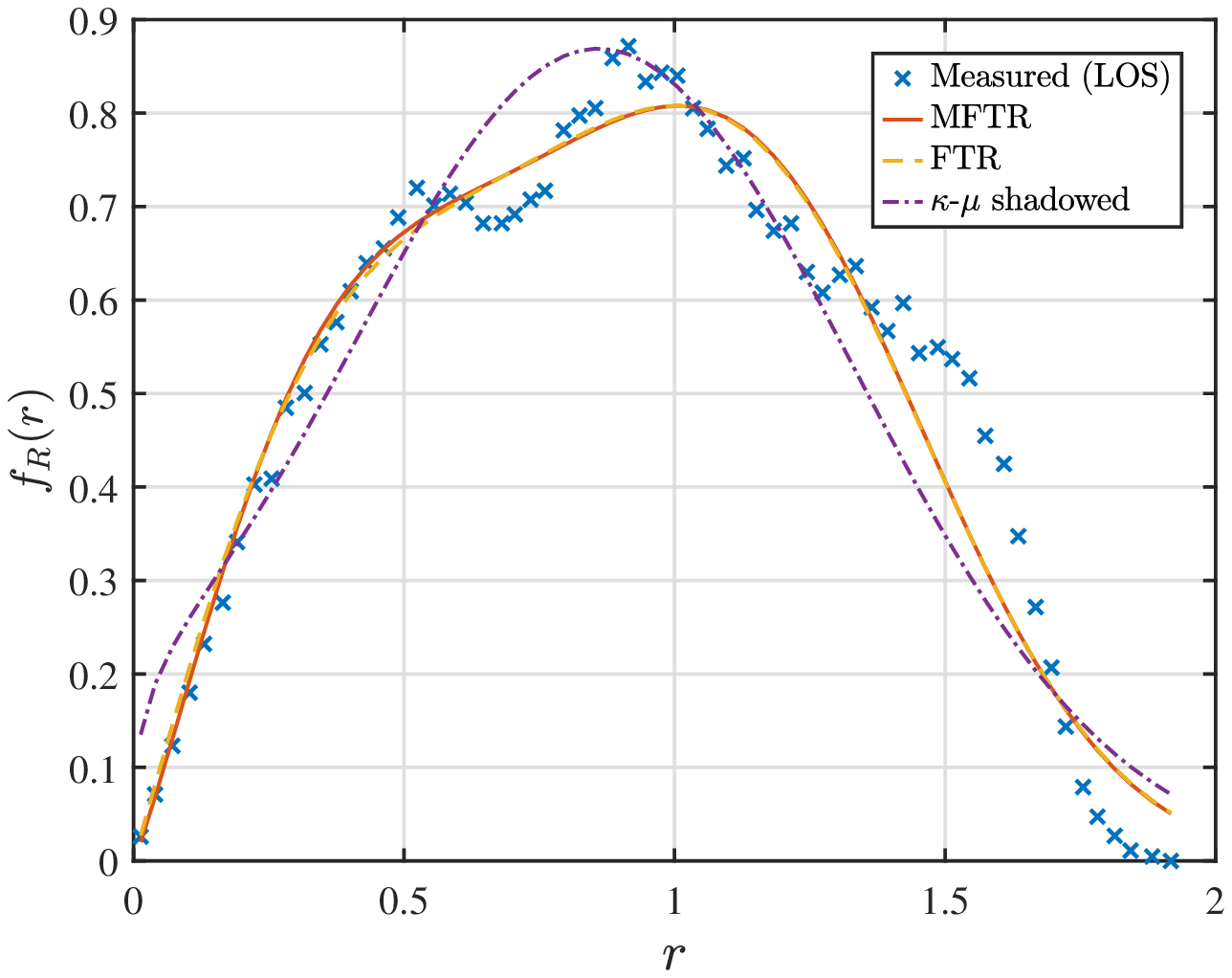}}
\newline
\subfigure[\hspace{-4cm}]{}\hspace{-0.4in}
{\includegraphics[width=0.28\textwidth]{./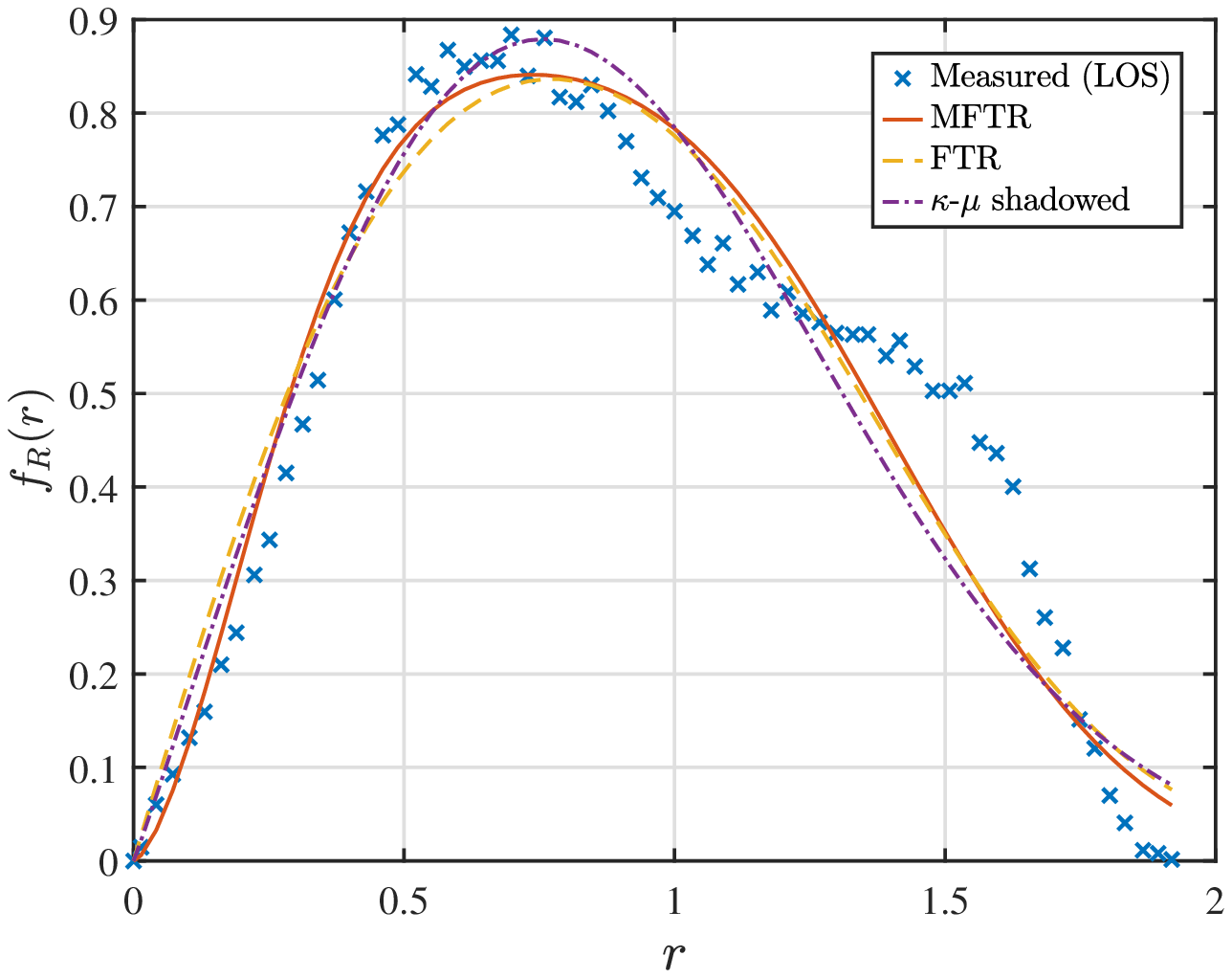}}
\subfigure[\hspace{-4cm}]{}\hspace{-0.2in}{\includegraphics[width=0.28\textwidth]{./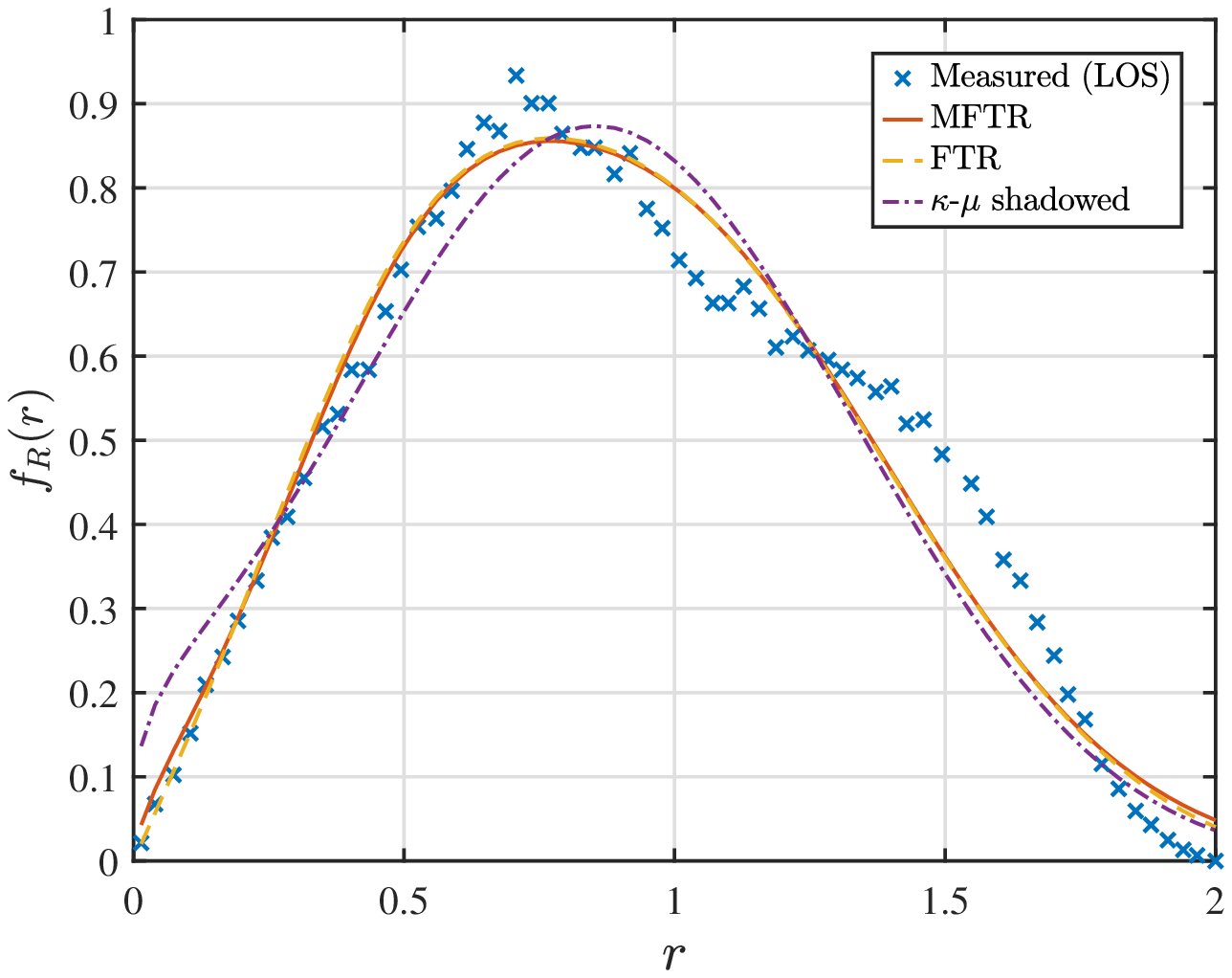}}
\newline
\subfigure[\hspace{-4cm}]{}\hspace{-0.4in}
{\includegraphics[width=0.275\textwidth]{./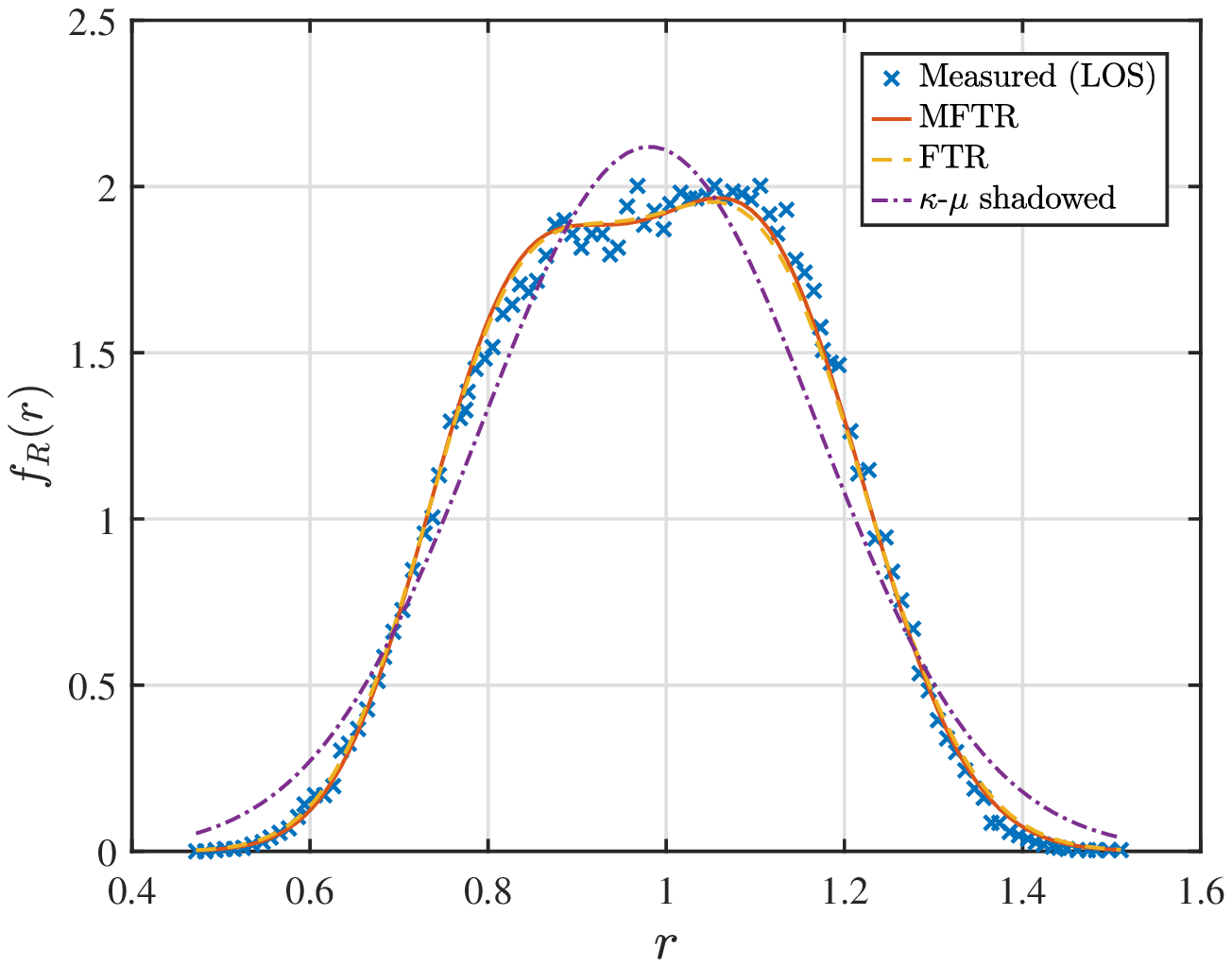}}
\subfigure[\hspace{-4cm}]{}\hspace{-0.2in}{\includegraphics[width=0.275\textwidth]{./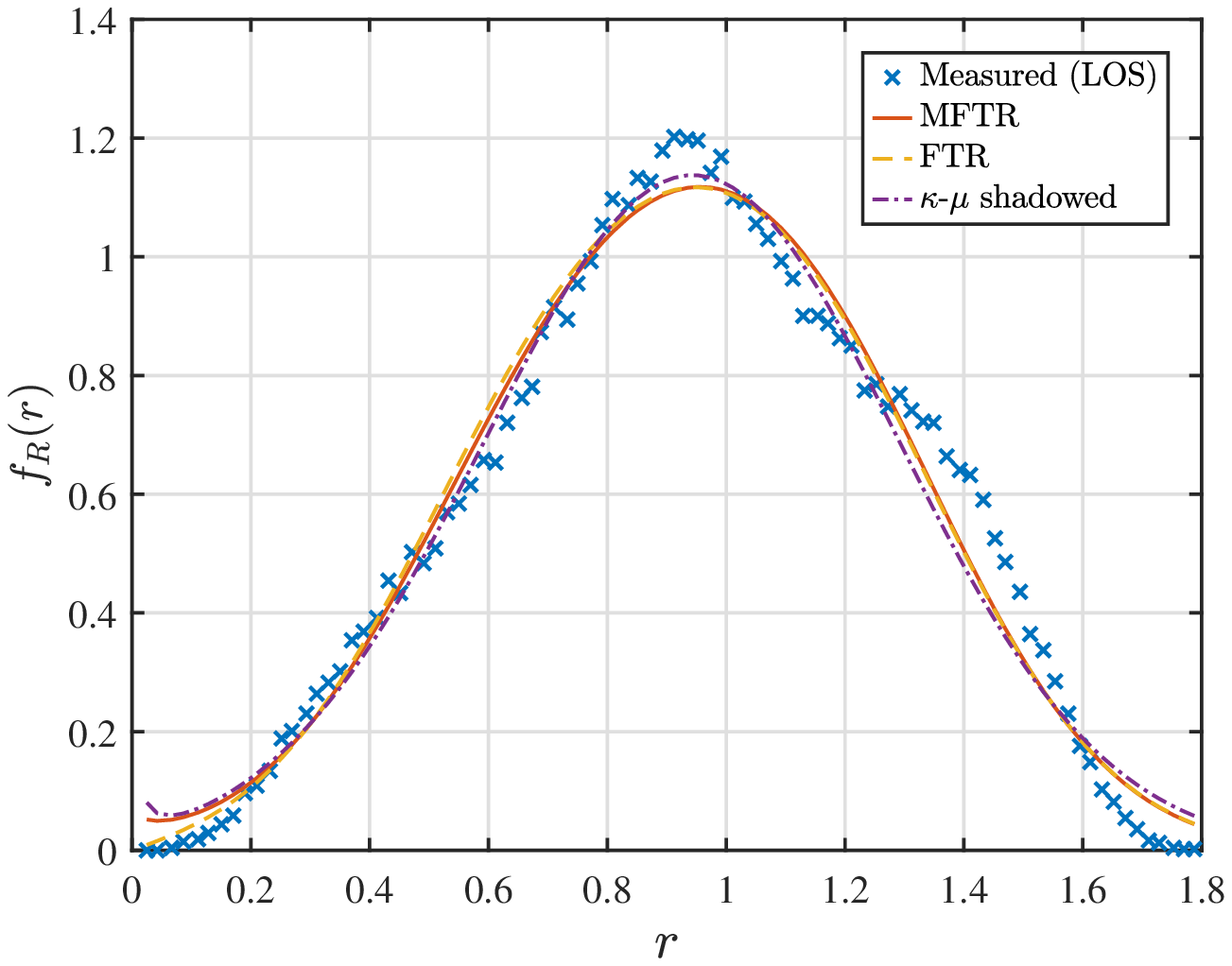}}
\caption{{Empirical vs theoretical PDFs of the received signal amplitude for LOS and NLOS scenarios.}}
\label{fig:fitting}
\end{figure}

{
\section{Empirical Validation}\label{secVal}
In this section, we validate the suitableness of the MFTR model to capture the stochastic features of outdoor THz wireless links in the 142 GHz band. For this purpose, we employ the empirical data presented in \cite{fitting1}, in which sets of experimental measurements have been conducted on the campus of Aalto University in Finland by assuming LOS and NLOS scenarios. More details concerning the experimental setup can be found in \cite{fitting2}. Here, our goal is to assess the ability of the MFTR to capture the
the physical behavior of the THz wireless link by means of multipath clusters in the presence of two dominant specular components with fluctuating amplitudes.}

{
In order to quantify the fitting accuracy, we use the well-defined and widely-accepted mean-square-error (MSE) test. Specifically, the MSE measures the goodness of fit between the empirical and theoretical PDFs
, denoted by $\widehat{f}_R(\cdot)$ and $f_R(\cdot)$, respectively, i.e., 
\begin{align}
\label{eqmse}
\text{MSE}=\frac{1}{T}\sum_{i=1}^{T}\left ( \widehat{f}_R(r_i)-f_R(r_i)\right )^{2},
\end{align}
where $T$ denotes the number of empirical PDF values.}
\begin{table*}[t]
\caption{{Fitting Results of MFTR, FTR and $\kappa$-$\mu$ shadowed fading models for different scenarios.}} 
\label{Tablefit}
\centering
{
\begin{tabular}{|c|c|c|c|c|c|c|c|c|c|c|c|c|c|}
\hline
\textbf{Case}              & \textbf{Distribution}    & $K$    & $\Delta$ & $\mu$  & $m$    & \textbf{MSE}     & \textbf{Case}                        & \textbf{Distribution}    & $K$    & $\Delta$ & $\mu$ & $m$    & \textbf{MSE}     \\ \hline
\multirow{3}{*}{\textbf{\begin{tabular}[c]{@{}c@{}}Fig. \\ 5a\end{tabular}}} & MFTR                     & 10.788 & 0.29     & 39.991 & 90.252 & \textbf{0.06746} & \multirow{3}{*}{\textbf{\begin{tabular}[c]{@{}c@{}}Fig. \\ 5d\end{tabular}}} & MFTR                     & 10.558 & 0.850    & 0.827 & 4.356  & \textbf{0.00304} \\ \cline{2-7} \cline{9-14}   & FTR                      & 38.141 & 0.003    & 1      & 79.323 & 0.32746          &                                                                              & FTR                      & 7.192  & 0.872    & 1     & 4.442  & 0.00306          \\ \cline{2-7} \cline{9-14}   & $\kappa$-$\mu$ shadowed & 11.102 & 0        & 2.986  & 89.337 & 0.32772          &                                    & $\kappa$-$\mu$ shadowed & 4.820  & 0        & 0.645 & 2.288  & 0.00559          \\ \hline
\multirow{3}{*}{\textbf{\begin{tabular}[c]{@{}c@{}}Fig. \\ 5b\end{tabular}}} & MFTR                     & 4.225  & 0.999    & 1.055  & 38.868 & \textbf{0.00351} & \multirow{3}{*}{\textbf{\begin{tabular}[c]{@{}c@{}}Fig. \\ 5e\end{tabular}}} & MFTR                     & 20.717 & 0.403    & 3.108 & 84.333 & \textbf{0.00241} \\ \cline{2-7} \cline{9-14}   & FTR                      & 4.113  & 0.993    & 1      & 89.999 & 0.00354          &                                                                              & FTR                      & 67.293 & 0.390    & 1     & 75.187 & 0.00276          \\ \cline{2-7} \cline{9-14}   & $\kappa$-$\mu$ shadowed & 3.924  & 0        & 0.650  & 2.868  & 0.00791          &                                                                              & $\kappa$-$\mu$ shadowed & 24.834 & 0        & 0.598 & 73.775 & 0.02689          \\ \hline
\multirow{3}{*}{\textbf{\begin{tabular}[c]{@{}c@{}}Fig. \\ 5c\end{tabular}}} & MFTR                     & 3.284  & 0.999    & 1.267  & 5.481  & \textbf{0.00543} & \multirow{3}{*}{\textbf{\begin{tabular}[c]{@{}c@{}}Fig. \\ 5f\end{tabular}}} & MFTR                     & 17.938 & 0.596    & 0.400 & 79.519 & \textbf{0.00347} \\ \cline{2-7} \cline{9-14}   & FTR                      & 1.958  & 1        & 1      & 18.909 & 0.00634          &                                                                              & FTR                      & 6.204  & 0.657    & 1     & 95.614 & 0.00362          \\ \cline{2-7} \cline{9-14}   & $\kappa$-$\mu$ shadowed & 0.525  & 0        & 1.023  & 21.989 & 0.00710          &                                                                              & $\kappa$-$\mu$ shadowed & 39.995 & 0        & 0.107 & 78.744 & 0.00392          \\ \hline
\end{tabular}}
\end{table*}
{
Table~\ref{Tablefit} reports the values of the MSE and the estimated fading parameters using the \emph{fminsearch} function of the Optimization Toolbox of MATLAB for each target distribution. Bold-faced numbers highlight the best-fitting result in each scenario. Here, according to the MSE criteria, we can see that the MFRT fading channel model has achieved the best fittings in all scenarios under study. On the other hand, in Fig.~\ref{fig:fitting}, we compare the theoretical PDFs of MFTR, FTR, and $\kappa$-$\mu$ shadowed fading models against the THz channel measurements for LOS and NLOS environments described in~\cite[Fig.~1]{fitting1}. From all traces, it can be observed that the MFRT model yields a more accurate fit to the empirical distributions. This is in accordance with the MSE values shown in Table~\ref{Tablefit}. At this point, an important remark is in order. Firstly, we want by no means to provide a thorough and tedious comparison between the proposed MFTR model and existing fading models when applied to fit channel measurements in THz or millimeter bands. Instead, we mainly want to provide some evidence that, for selected channel measurement campaigns, the proposed MFTR fading model offers a good balance between versatility, flexibility, the goodness of fit and physical underpinnings. For instance, for the NLOS environment in Fig.~\ref{fig:fitting}a, the MFTR captures the richness of the dispersion through a high value of $\mu$ (i.e., $39.99$), which also allows capturing the pronounced bimodality of the distribution. Another key interpretation of this fading parameter is that as $\mu$ increases, the MFTR model smooths out the goodness of fit as the distribution moves away from the origin on the x-axis. In particular, this behavior is present in Figs.~\ref{fig:fitting}a and~\ref{fig:fitting}e, where the empirical distributions start at approximately $0.7$ and $0.5$ on the x-axis, respectively. Conversely, in all other figures, the empirical measurements begin at $0$ on the x-axis, which translates to small values of $\mu$. In short, the MFTR model provides additional flexibility (i.e., more degrees of freedom) to characterize small-scale empirical fading data compared to $\kappa$-$\mu$ shadowed and FTR models from which it originates.
}



\section{Performance Analysis in Wireless Systems}\label{sec5}
In this section, we illustrate the flexibility of the MFTR model when used for performance analysis. For exemplary purposes, we analyze key performance metrics such as the outage probability (OP), both in exact and asymptotic forms, as well as the AoF \cite{Simon}.
\subsection{Outage Probability}
The instantaneous Shannon channel capacity per unit bandwidth is defined as
\begin{align}
\label{eq22}
    C=\log_2(1+\gamma).
\end{align}
The outage probability is defined as the probability that the capacity $C$ falls below a certain
threshold rate $R_{\rm th}$, i.e.,  
\begin{align}
\label{eq23}
 P_{\rm out}&=\Pr\left \{ C<R_{\rm th} \right \}=\Pr\left \{ \log_2 (1+\gamma)<R_{th} \right \} \nonumber \\ & = \Pr\left \{ \gamma <\underbrace{2^{R_{\rm th}}-1}_{\gamma_{\rm th}} \right \}.
\end{align}
{Consequently,} the OP is given in terms of SNR's CDF as 
\begin{align}
\label{eq24}
 P_{\rm out}&=F_\gamma\left ( 2^{R_{\rm th}}-1 \right ),
\end{align}
in which $F_\gamma\left ( \cdot \right )$
is given by either \eqref{eq14}, \eqref{eq16} or \eqref{eq19}. Although expressed in exact form in terms of the MFTR CDFs previously derived, $ P_{\rm out}$  provides a limited insight concerning the effect of fading parameters on the system performance. Thus, we introduce a closed-form asymptotic OP expression to evaluate the high-SNR regime's system performance below.

\subsection{Asymptotic Outage \textcolor{black}{ Probability}}
Here, to get further insights about the role of the fading parameters on system performance, we derive an asymptotic expression to investigate the behavior of the OP given in~\eqref{eq24} in the high-SNR regime. Our goal is to obtain an asymptotic expression in the form $P_{\rm out}\simeq \mathrm{G}_c\left(\gamma_{\rm th}/\overline{\gamma}\right)^{\mathrm{G}_d}$~\cite{Giannakis2003}, where $\mathrm{G}_c$ and $\mathrm{G}_d$ denote the power offset and the diversity order, respectively. Hence, the corresponding asymptotic OP is given in the following Proposition.

\begin{prop}\label{propo1}
The asymptotic expression of the OP over MFTR channels can be obtained as
\end{prop}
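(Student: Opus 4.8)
The plan is to exploit the infinite Gamma-mixture representation of the CDF in Lemma~\ref{lemma3}, because the small-argument behaviour of a Gamma CDF is elementary and directly exposes both the diversity order and the power offset. Starting from the exact outage expression $P_{\rm out}=F_\gamma(\gamma_{\rm th})$ in \eqref{eq24} and substituting the series \eqref{eq16} gives
\begin{align}
P_{\rm out}=\sum_{i=0}^{\infty} w_i\, F_X^{\rm G}\!\left(\mu+i;\frac{\overline{\gamma}(\mu+i)}{\mu(K+1)};\gamma_{\rm th}\right),\nonumber
\end{align}
with the weights $w_i$ given in \eqref{eqWeights} and the Gamma CDF $F_X^{\rm G}(\cdot)$ in \eqref{eq17}.

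Next I would analyse the high-SNR limit $\overline{\gamma}\to\infty$ with $\gamma_{\rm th}$ fixed. A convenient feature of \eqref{eq16} is that, upon inserting the scale $\nu_i=\overline{\gamma}(\mu+i)/[\mu(K+1)]$, the argument of the lower incomplete gamma function in the $i$-th term simplifies to $\mu(K+1)\gamma_{\rm th}/\overline{\gamma}$, which is independent of $i$ and vanishes as $\overline{\gamma}\to\infty$. Invoking the small-argument expansion $\gamma(\lambda,z)\sim z^{\lambda}/\lambda$ as $z\to 0$, each term behaves as
\begin{align}
F_X^{\rm G}\!\left(\mu+i;\nu_i;\gamma_{\rm th}\right)\simeq\frac{1}{\Gamma(\mu+i+1)}\left(\frac{\mu(K+1)\gamma_{\rm th}}{\overline{\gamma}}\right)^{\mu+i}.\nonumber
\end{align}
Hence the $i$-th term scales as $(\gamma_{\rm th}/\overline{\gamma})^{\mu+i}$, so the dominant contribution in the high-SNR regime is the one with the smallest exponent, namely $i=0$, which immediately fixes the diversity order $\mathrm{G}_d=\mu$.

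Finally, I would extract the power offset by retaining only the $i=0$ term and evaluating its weight. Setting $i=0$ in \eqref{eqWeights} collapses the binomial sum to its single $q=0$ contribution; using $\Gamma(1/2)=\sqrt{\pi}$ the $\sqrt{\pi}$ factors cancel and one obtains $w_0=\bigl(m/[\mu K(1-\Delta)+m]\bigr)^{m}\,{}_2F_1\!\bigl(m,\tfrac12;1;-2\mu K\Delta/[\mu K(1-\Delta)+m]\bigr)$. Combining this with the leading Gamma asymptote yields
\begin{align}
\mathrm{G}_c=\frac{\bigl(\mu(K+1)\bigr)^{\mu}}{\Gamma(\mu+1)}\,w_0,\qquad \mathrm{G}_d=\mu,\nonumber
\end{align}
so that $P_{\rm out}\simeq\mathrm{G}_c\,(\gamma_{\rm th}/\overline{\gamma})^{\mu}$ in the desired form.

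The main obstacle I anticipate is rigorously justifying that the $i=0$ term truly dominates, i.e.\ that the remainder of the series contributes only strictly higher-order terms and that the term-by-term expansion is legitimate. Since every term carries the common vanishing factor $\bigl(\mu(K+1)\gamma_{\rm th}/\overline{\gamma}\bigr)^{\mu+i}$ with strictly increasing exponent, and the $\overline{\gamma}$-independent weights $w_i$ sum to unity (being the mixing weights of the valid density \eqref{eq15}), this reduces to a standard dominated-convergence argument; the only care needed is to confirm summability so that no resummation can lower the effective exponent below $\mu$.
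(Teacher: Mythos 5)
Your proof is correct but follows a genuinely different route from the paper's. The paper starts from the known asymptotic CDF of the conditional $\kappa$-$\mu$ shadowed distribution (its Eq.~\eqref{apenIIIeq1}, taken from an external reference), maps it to the MFTR model via the relations $\kappa_\theta=K(1+\Delta\cos\theta)$ and $(1+\kappa_\theta)/\overline{\gamma}_\theta=(1+K)/\overline{\gamma}$, averages over $\theta$, and evaluates the resulting integral $I_3=\frac{1}{\pi}\int_0^\pi\bigl(\mu K(1+\Delta\cos\theta)+m\bigr)^{-m}d\theta$ in closed form. You instead expand the Gamma-mixture series of Lemma~\ref{lemma3} term by term, using that the argument of the lower incomplete gamma function is the same for every $i$, so the $i$-th term is $O\bigl((\gamma_{\rm th}/\overline{\gamma})^{\mu+i}\bigr)$ and $i=0$ dominates. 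The two routes meet at the same object, since $w_0=m^m I_3$; your approach buys a self-contained derivation (given Lemma~\ref{lemma3}), explicit access to the higher-order terms, and an easy rigorous bound on the remainder --- your worry about dominated convergence is settled by $\gamma(\lambda,z)\le z^{\lambda}/\lambda$ for $z\ge0$, which together with $w_i\ge0$ and $\sum_i w_i=1$ shows the tail is $O\bigl((\gamma_{\rm th}/\overline{\gamma})^{\mu+1}\bigr)$. One remark on the final expression: your ${}_2F_1\bigl(m,\tfrac12;1;-2\mu K\Delta/[\mu K(1-\Delta)+m]\bigr)$ carries a negative argument while the paper's \eqref{eq25} has a positive one with the same prefactor $\bigl(m+\mu K(1-\Delta)\bigr)^{-m}$; a Pfaff transformation shows the positive-argument form should instead be paired with $\bigl(m+\mu K(1+\Delta)\bigr)^{-m}$, and a check at $m=1$ against the elementary value $\bigl[(\mu K+m)^2-(\mu K\Delta)^2\bigr]^{-1/2}$ of the integral confirms that your pairing is the internally consistent one, so the apparent discrepancy is a sign slip in the paper's transcription of $I_3$ rather than an error on your part.
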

\begin{align}
\label{eq25}
P_{\rm out}\simeq&\frac{\mu^{\mu-1}(1+K)^{\mu}m^m (2^{R_{\rm th}}-1)^{\mu}}{\Gamma(\mu) \overline{\gamma}^{\mu} \left( m-\left( \Delta-1 \right)K \mu\right)^{m}} \nonumber \\ & \times {}_{2}F_1 \left( 1/2,m;1;\frac{2\Delta K \mu}{K\mu\left( 1-\Delta \right)+m} \right).
\end{align}
\begin{proof}
	See Appendix~\ref{AsympCDF}.
\end{proof}
From~\eqref{eq25}, notice that the diversity order is linked to the number of multipath waves clusters, i.e., $\mathrm{G}_d=\mu$.
\subsection{Amount of Fading}
The AoF is a popular metric used to quantify the severity of fading experienced during transmission under fading channels. It is defined as \cite{Simon}
\begin{align}
\label{eq26}
\rm{AoF}=\frac{\mathbb{V}\left \{\gamma \right \}}{\mathbb{E}\left \{ \gamma \right \}^2}=\frac{\mathbb{E}\left \{ \gamma^2 \right \}-\mathbb{E}\left \{ \gamma \right \}^2}{\mathbb{E}\left \{ \gamma \right \}^2}=\frac{\mathbb{E}\left \{ \gamma^2 \right \}}{\mathbb{E}\left \{ \gamma \right \}^2}-1.
\end{align}
Based on \eqref{eq26}, a closed-form expression of the AoF is given as stated in the following Proposition. 
\begin{prop}\label{propo2}
Assuming non-constrained arbitrary fading values, a closed-form expression for the AoF over MFTR channels can be formulated as
\end{prop}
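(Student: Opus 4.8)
The plan is to exploit the fact that the AoF in \eqref{eq26} depends only on the first two moments of $\gamma$, which I would obtain \emph{directly} from the physical model \eqref{eq4}--\eqref{eq5} by iterated expectation, rather than by differentiating the Legendre-function MGF of Lemma~\ref{lemma1} (which is available but cumbersome). Since the AoF is invariant to scaling, I would work with the unnormalized power $W=R^2$ and use $\mathrm{AoF}=\mathbb{E}\{W^2\}/\mathbb{E}\{W\}^2-1$, the ratio being identical for $W$ and $\gamma=(E_s/N_0)W$.

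First I would condition on the common fluctuation $\zeta$ and on the phase difference $\psi=\phi_1-\phi_2$ of the two cluster-1 rays. Given $(\zeta,\psi)$, \eqref{eq5} is a sum of $2\mu$ independent real Gaussians of variance $\sigma^2$ whose squared means add up to the total specular energy $S=\zeta(V_1^2+V_2^2+\sum_{i=2}^{\mu}U_i^2+2V_1V_2\cos\psi)=\zeta P(1+\Delta\cos\psi)$, where $P=V_1^2+V_2^2+\sum_{i=2}^{\mu}U_i^2=2\sigma^2\mu K$ by \eqref{eq6} and $2V_1V_2=\Delta P$ by \eqref{eq7}; note that the remaining clusters contribute $\zeta U_i^2$ irrespective of their phases, so only $\psi$ needs to be tracked. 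The conditional law of $W$ thus has the moment structure of a non-central chi-square with $2\mu$ degrees of freedom, for which the elementary identities $\mathbb{E}\{W\mid S\}=S+2\mu\sigma^2$ and $\mathbb{E}\{W^2\mid S\}=S^2+4\sigma^2(1+\mu)S+4\mu\sigma^4(1+\mu)$ hold (the latter following from $\mathrm{Var}(W\mid S)=4\sigma^2 S+4\mu\sigma^4$). For non-integer $\mu$ these low-order moments extend by analytic continuation, and can equivalently be read off the mixture representations of Lemmas~\ref{lemma4} and~\ref{lemma3}, which provides a useful cross-check.

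Next I would average over $\psi\sim\mathcal{U}(0,2\pi)$ and over the unit-mean Gamma variable $\zeta$ of \eqref{eq3}, using $\mathbb{E}\{\cos\psi\}=0$, $\mathbb{E}\{\cos^2\psi\}=1/2$, $\mathbb{E}\{\zeta\}=1$ and $\mathbb{E}\{\zeta^2\}=(m+1)/m$. This yields $\mathbb{E}\{S\}=P$ and $\mathbb{E}\{S^2\}=\tfrac{m+1}{m}P^2(1+\Delta^2/2)$, whence $\mathbb{E}\{W\}=2\sigma^2\mu(1+K)$ (recovering \eqref{eq8}) and a closed form for $\mathbb{E}\{W^2\}$. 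Substituting $P=2\sigma^2\mu K$ and forming the ratio, the $\sigma$-dependence cancels and, after simplification, I expect the compact result $\mathrm{AoF}=\{\mu K^2[2+(m+1)\Delta^2]+2m(2K+1)\}/\{2m\mu(1+K)^2\}$.

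The main obstacle is not any special function but the bookkeeping in the conditional second moment: one must correctly count the $2\mu$ real Gaussian components, combine the cross terms of $(S+2\mu\sigma^2)^2$ with $\mathrm{Var}(W\mid S)$, and recognize that the interference term $2V_1V_2\cos\psi$ is the \emph{only} phase-dependent contribution, the inter-cluster specular powers adding incoherently. A decisive sanity check, which I would perform, is to specialize the final expression to the entries of Table~\ref{Cases}: it must collapse to the known Rician value $(1+2K)/(1+K)^2$ as $m\to\infty$, $\Delta=0$, $\mu=1$, and to the Nakagami-$m$ value $1/\mu$ when $\Delta=0$ with either $K\to\infty$ or $K\to0,\,m\to\infty$; matching these limits pins down the algebra.
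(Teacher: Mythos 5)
Your proposal is correct, and it reaches a formula algebraically identical to \eqref{eq27}: expanding the paper's result gives $\mathrm{AoF}=\tfrac{2m(\mu+1)(1+2K)+\mu K^{2}(m+1)(2+\Delta^{2})}{2m\mu(1+K)^{2}}-1$, and subtracting the $2m\mu(1+K)^{2}$ term from the numerator collapses it exactly to your compact form $\{\mu K^{2}[2+(m+1)\Delta^{2}]+2m(2K+1)\}/\{2m\mu(1+K)^{2}\}$; your Rician and Nakagami-$m$ limit checks also match Table~\ref{Cases}. The route differs from the paper's in where the conditional moments come from. The paper (Appendix~\ref{AoF}) conditions only on the phase difference $\theta$, imports the second moment of the $\kappa$-$\mu$ shadowed distribution \eqref{apenIIIeq5} from the literature, maps it to MFTR via the connections \eqref{Lema1eq7} and \eqref{Lema1eq10}, and then needs only $\mathbb{E}\{\kappa_\theta\}=K$ and $\mathbb{E}\{\kappa_\theta^{2}\}=K^{2}(1+\Delta^{2}/2)$. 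You instead condition on both $\zeta$ and $\psi$ and rebuild the conditional moments from first principles via the noncentral chi-square identities for the $2\mu$ Gaussian components of \eqref{eq5}, then average over $\psi$ and $\zeta$ separately using $\mathbb{E}\{\zeta^{2}\}=(m+1)/m$. What the paper's approach buys is brevity and automatic validity for real $\mu$ (inherited from the $\kappa$-$\mu$ shadowed moment); what yours buys is self-containedness and physical transparency, at the cost of having to argue the extension of the "$2\mu$ real Gaussians" picture to non-integer $\mu$ by analytic continuation or via the mixture representations of Lemmas~\ref{lemma4} and~\ref{lemma3} -- a point you correctly flag and which is the only step that is not fully elementary.
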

\begin{align}
\label{eq27}
\rm{AoF}=&\left( 1-\frac{K^2}{(1+K)^2} \right)\left( 1+\frac{1}{\mu} \right)+\frac{K^2}{(1+K)^2} \nonumber \\ & \times \left( 1+\frac{1}{m} \right)\left( 1+\frac{\Delta^2}{2} \right)-1.
\end{align}
\begin{proof}
	See Appendix~\ref{AoF}.
\end{proof}
\begin{figure*}[t!]
	{
	\small
\begin{align}\label{ABERexact}
 \overline{P_{e}}=&\frac{ \left ( 1+K \right )^{\mu}\mu ^{\mu}}{2^{m-1}\Gamma(\mu+1) \overline{\gamma}^\mu}\left ( \frac{m}{\sqrt{\left ( m+\mu K \right )^2-\mu^2 K^2 \Delta^2}} \right )^m\sum_{q=0}^{  \left \lfloor \frac{m-1}{2} \right \rfloor}\left( -1 \right)^q C_q^{m-1} \left ( \frac{m+\mu K}{\sqrt{\left ( m+\mu K \right )^2-\mu^2 K^2 \Delta^2}} \right )^{m-1-2q}  \nonumber \\ \times &\ \sum_{r=1}^{R}\alpha_r \frac{2^{\mu-1}}{\beta_r^{\mu}\sqrt{\pi}}\Gamma\left ( \mu+1/2 \right ) F_D^{(4)}\Biggl(\mu+\frac{1}{2},1+2q-m,m-q-\frac{1}{2},m-q-\frac{1}{2},\mu-m;\mu+1;-\frac{2\left( 1+K \right)m\mu }{\beta_r\left( m+\mu K \right)\overline{\gamma }}, \nonumber \\ & -\frac{2\left( 1+K \right)m\mu }{\beta_r\left( m+\mu K\left( 1+\Delta  \right) \right)\overline{\gamma }}, 
 -\frac{2\left( 1+K \right)m\mu }{\beta_r\left( m+\mu K\left( 1-\Delta  \right) \right)\overline{\gamma }},-\frac{2\left( 1+K \right)\mu }{\beta_r\overline{\gamma }} \Biggl).
 \end{align}
	\hrulefill
	}
\end{figure*}
{
\subsection{Average BER}
The exact solution for the average BER affected by additive white Gaussian
noise (AWGN), over the output SNR can be defined as in~\cite{Simon} by
\begin{equation}\label{abep1}
\overline{P_{e}}=\int_{0}^{\infty}P_{E}(x)f_\gamma(x)dx,
\end{equation}
where $P_{E}(x)$ denotes the conditional  error probability (CEP). Considering integration by parts in \eqref{abep1}, the average BER can be computed as function of the CDF, given by
\begin{equation}\label{abep2}
\overline{P_{e}}=-\int_{0}^{\infty} P_{E}^{'}(x)F_\gamma(x)dx,
\end{equation}
where $P_{E}^{'}(x)$ denotes the first order derivative of the CEP. For several modulation schemes, $P_{E}(x)$ can be defined as in~\cite{Simon} by
\begin{equation}\label{abep3}
P_{E}(x)=\sum_{r=1}^{R}\alpha_rQ\left ( \sqrt{\beta_rx} \right ),
\end{equation}
where $\left \{\beta_r, \alpha_r  \right \}_{r=1}^{R}$ are constants that depend on the type of modulation. Taking the derivative of \eqref{abep3}, it follows that
\begin{equation}\label{abep4}
P_{E}^{'}(x)=-\sum_{r=1}^{R}\alpha_r\sqrt{\frac{\beta_r}{8\pi x}}e^{-\frac{\beta_r x}{2}}.
\end{equation}
Finally, by substituting~\eqref{eq14} and \eqref{abep4} into \eqref{abep2} followed by some manipulations and then making use of~\cite[~Eq.~(43)]{Brychkov}, a closed-form fashion of the average BER can be expressed as in \eqref{ABERexact}.
}
{
\subsection{Asymptotic Average BER}
In this section, we derive an asymptotic closed-form expression for the average BER in order to gain more insights into the impact of the fading parameters of the systems. Here, we consider the behavior in the high SNR regime where $\overline{\gamma}\rightarrow \infty$. Again, our aim is to express the asymptotic average BER as $\overline{P_{e}}^{\infty}\simeq \mathrm{G}_c\overline{\gamma}^{-\mathrm{G}_d}$~\cite{Giannakis2003}. For that purpose, we use the approach in~\cite[Proposition 3]{Giannakis2003}, where asymptotic $\overline{P_{e}}$ can be obtained from the MGF given in~(\ref{eq9}). In light of this, we first compute $\abs{ {{{\mathcal{M}}}_{\gamma }}\left( s \right)}$ as 
\begin{align}
\label{abepAsin1}  
 \abs{  {{{\mathcal{M}}}_{\gamma }}\left( s \right)}=&\frac{{{m}^{m}}{{\mu }^{\mu }}{{\left( 1+K \right)}^{\mu }}{{}}}{{{\left( \sqrt{\left( m+\mu K \right)^{2}-\left( \mu K\Delta  \right)^{2}} \right)}^{m}}}\nonumber \\ & \times {{P}_{m-1}}\left( \frac{m\mu \left( 1+K \right)}{\sqrt{\left( m+\mu K \right)^{2}-\left( \mu K\Delta  \right)^{2}}} \right)\frac{\abs{s}^{-1}}{\overline{\gamma }^{\mu}}
 \nonumber \\ & 
 +o\left ( \abs{s}^{-1} \right ),
  \end{align}
  where we write a function $a(x)$ of $x$ as $o(x)$ if $\lim_{x\to 0} a(x)/x=0 $, in which the Legendre polynomial is computed from~(\ref{eq11}). Then, by carrying out the same methodology given in~\cite[Proposition 1 and 3]{Giannakis2003}, $\overline{P_{e}}^{\infty}$ can be defined as
  \begin{align}
\label{abepAsin2}  
    \overline{P_{e}}^{\infty}\simeq & \frac{{{m}^{m}}{{\mu }^{\mu }}{{\left( 1+K \right)}^{\mu }}{{}}}{{{\left( \sqrt{\left( m+\mu K \right)^{2}-\left( \mu K\Delta  \right)^{2}} \right)}^{m}}}  \sum_{r=1}^{R}\alpha_r \frac{2^{\mu-1}}{\beta_r^{\mu}\sqrt{\pi}}\Gamma\left ( \mu+\tfrac{1}{2} \right ) \nonumber \\ & \times {{P}_{m-1}}\left( \frac{m\mu \left( 1+K \right)}{\sqrt{\left( m+\mu K \right)^{2}-\left( \mu K\Delta  \right)^{2}}} \right)\frac{1}{\overline{\gamma }^\mu}.
\end{align}
As in $P_{\rm out}$ metric, from~\eqref{abepAsin2}, it is clearly seen that the diversity order of $ \overline{P_{e}}$ is associated to the number of multipath waves clusters, i.e., $\mathrm{G}_d=\mu$.
}
{
\subsection{Ergodic Capacity}
The EC is defined as the maximum achievable rate averaged over all the fading contributions. Mathematically speaking, the EC can be expressed as~\cite{Simon}
\begin{align}
\label{ECeq1}  
    \overline{C}= &\int_{0}^{\infty}\log_2 \left ( 1+x \right )f_\gamma(x)dx,
\end{align}
where $f_\gamma(\cdot)$ is the PDF of the received SNR $\gamma$ under MFTR fading channels. For the sake of mathematical tractability, we employ the alternative PDF of the MFTR model given in terms of the mixture of Gamma Distributions to compute $\overline{C}$. Hence, substituting~(\ref{eq15}) into \eqref{ECeq1}, and after some manipulations, EC can be obtained  as 
\begin{align}
\label{ECeq2}  
    \overline{C}= &\sum_{i=0}^{\infty}{w_i}\frac{\left ( \frac{\lambda}{\nu} \right )^{\lambda}}{\Gamma(\lambda)\ln (2)}\sum_{k=1}^{\infty}\frac{1}{k}\Gamma(k+\lambda)\bold{U}\left ( k+\lambda,1+\lambda,\frac{\lambda}{\nu} \right ),
\end{align}
where $w_i$ is given by~(\ref{eqWeights}), $\lambda=\mu+i$, and $\nu=\frac{\overline{\gamma}(\mu+i)}{\mu(K+1)}$.
}

\begin{figure}[t]
\centering
\psfrag{A}[Bc][Bc][0.7]{$\mathrm{\textit{m}=10}$}
\psfrag{B}[Bc][Bc][0.7]{$\mathrm{\textit{m}=2}$}
\includegraphics[width=1\linewidth]{./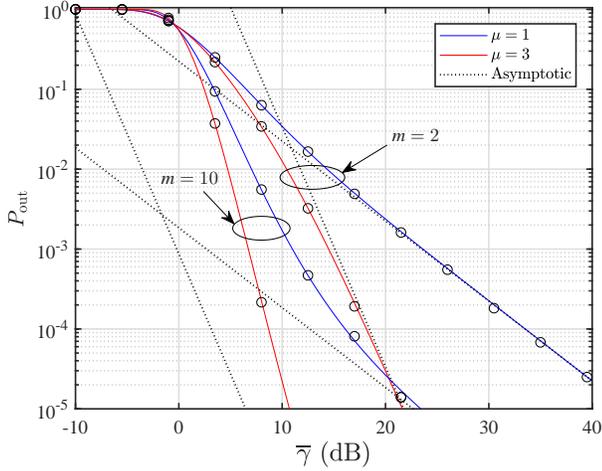}
\caption{ $ P_{\rm out}$ as a function of the average SNR, for different values of $\mu$ and $m$. The remaining fading parameters are: $K=15, \ \Delta=0.1, \ R_{\rm th}=1$. \textcolor{black}{Solid lines correspond to the  $P_{\rm out}$ derived from \eqref{eq24}, and dotted lines correspond to the asymptotic $ P_{\rm out}$ derived from \eqref{eq25}}. Markers denote MC simulations  }
\label{figOP1}
\end{figure}

\section{Numerical Results}\label{sec6}
In this section, we provide illustrative numerical results along with MC simulations{\footnote{{\textbf{Reproducible Research:} The simulation code on how to generate the PDF and CDF of the MFTR model in both amplitude and power/SNR distributions, as well as the Monte Carlo simulation, is available at: \url{https://github.com/JoseDavidVega/MFTR-Fading-Channel-Model}}}} to verify the analytical performance metrics derived in the previous section by assuming different fading conditions. 

Figs.~\ref{figOP1} and.~\ref{figOP2} illustrate the impact of different propagation mechanisms, namely, NLOS- LOS-condition, LOS fluctuation, the existence of one/two dominant components in cluster 1, and clustering of multipath waves, on the OP performance. In both figures, we assume $K=15$ and $R_{\rm th}=1$. Specifically, in Fig.~\ref{figOP1}, we evaluate the OP vs. the average SNR by varying $\mu$ and $m$ for $\Delta=0.1$. From all traces, it can be observed that a contribution of the LOS component with a mild fluctuation ($m=10$) together with a rich scattering environment ($\mu=3$) favors the OP performance. Conversely, when both the LOS fluctuation is severe (i.e., lower values of $m$), and a poor scattering condition exists ($\mu=1$), the OP performance is noticeably reduced. On the other hand, in Fig.~\ref{figOP2}, we show the OP as a function of the average SNR for different fading values of $\Delta$ and $\mu$, assuming a mild fluctuation ($m=5$) for the LOS component. Here, we can see that the combination of similar ($\Delta=0.9$) specular components with many clusters of multipath waves ($\mu=4$) derives into a better OP behavior. However, in the opposite scenario, i.e., dissimilar ($\Delta=0.1$) specular components together with reduced clustering of scattering waves, the OP significantly deteriorates.

\begin{figure}[t]
\centering
\psfrag{A}[Bc][Bc][0.7]{$\Delta=0.1$}
\psfrag{B}[Bc][Bc][0.7]{$\Delta=0.9$}
\includegraphics[width=1\linewidth]{./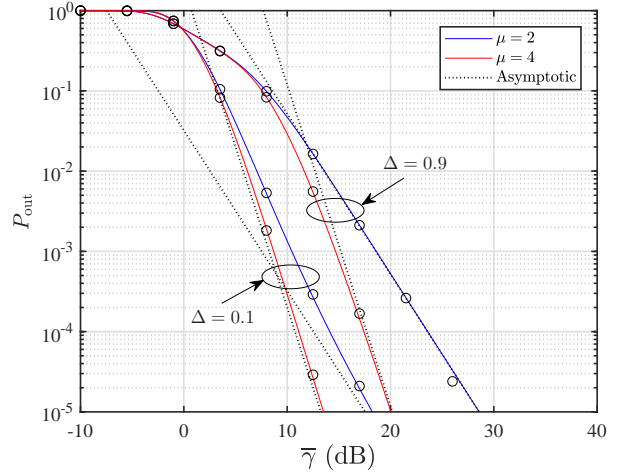}
\caption{ $ P_{\rm out}$ as a function of the average SNR, for different values of $\mu$ and $\Delta$. The remaining parameters are: $K=15, \ m=5, \ R_{\rm th}=1 $. \textcolor{black}{Solid lines correspond to the  $P_{\rm out}$ derived from \eqref{eq24}, and dotted lines correspond to the asymptotic $ P_{\rm out}$ derived from \eqref{eq25}}. Markers denote MC simulations }
\label{figOP2}
\end{figure}


Furthermore, in Figs.~\ref{figOP1} and.~\ref{figOP2}, we see that the number of clusters of multipath waves contributes directly to the OP slope. This means that the decay in the OP is steeper (i.e., better performance) as the number of scattering wave clusters increases. This is in coherence with the derived diversity order, i.e., $\mathrm{G}_d=\mu$.


{Finally, Fig.~\ref{figAbep} depicts the average BER for binary phase-shift keying (BPSK) modulation vs. the average SNR by varying $\mu$ and $\Delta$ for a fixed LOS fluctuation, i.e., $m=2$. The parameters for BPSK modulation in \eqref{ABERexact} are set to $R=1$, and $\left \{\beta_r, \alpha_r  \right \}_{r=1}^{R}=\left \{2,1\right \}$. From all curves, we can see that the eventual similarity specular components in the first cluster does not dominate ($\Delta=0.1$), experiencing a moderate fluctuation ($m=2$) in a rich scattering condition ($\mu=3$) yields the best $\overline{P_{e}}$ behavior. Conversely, average BER performance worsens as $\Delta$ increases coupled with a dispersion-poor environment. Also, as in the $P_{\rm out}$ metric, the $\overline{P_{e}}$ slope directly depends on the number of clusters of multipath waves, which is associated with the obtained diversity order $\mathrm{G}_d=\mu$.}


\section{Conclusions}\label{sec7}
We introduced, for the first time in the literature, a stochastic fading model that combines the key features of ray-based and power envelope-based approaches. This newly proposed fading model unifies and generalizes both the FTR and the $\kappa$-$\mu$ shadowed fading models, with a comparable mathematical complexity. The MFTR model captures, through a limited set of physically-meaningful parameters, a number of propagation conditions that appear in many practical scenarios: LOS/NLOS propagation, amplitude imbalances between dominant specular components, random fluctuation of the dominant specular waves, and clustering of multipath waves. In order to facilitate its use for performance analysis purposes, alternative expressions for its statistics are derived, which enable us to express the desired performance metric under MFTR fading directly from those available either for $\kappa$-$\mu$ shadowed or Nakagami-$m$ fading. All these features make the MFTR model a strong contender to being the most generalized and comprehensive fading channel model in the state-of-the-art up to the present date.
\begin{figure}[t]
\centering
\psfrag{A}[Bc][Bc][0.7]{$\Delta=0.1$}
\psfrag{B}[Bc][Bc][0.7]{$\Delta=0.9$}
\includegraphics[width=1\linewidth]{./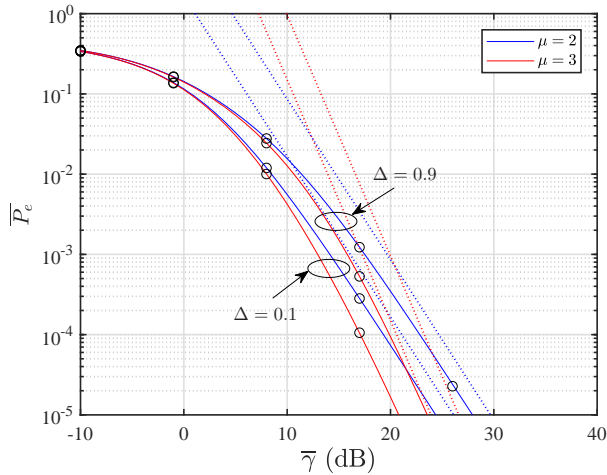}
\caption{{$ \overline{P_{e}}$ for BPSK modulation as a function of the average SNR, for different values of $\mu$ and $\Delta$. The remaining parameters are: $K=10, \ m=2 $. {Solid lines correspond to the $\overline{P_{e}}$ derived from \eqref{ABERexact}, and dotted lines correspond to the asymptotic $\overline{P_{e}}^{\infty}$ derived from \eqref{abepAsin2}}. Markers denote MC simulations.}}
\label{figAbep}
\end{figure}

\appendices

\section{ Proof of Lemma~\ref{lemma1}}

\label{ap:MGF}
Departing from a generalized physical cluster-based  model \cite{kms,kappamuYacub}, let us consider a compact form expression for the signal power, {$W$},  {as in \eqref{eqclusters},} given by
\begin{align}
\label{Lema1eq1}
  {W}=\sum_{i=1}^{\mu}\left | Z_i+\sqrt{\zeta}  p_i  \right |^2.
\end{align}
In purely cluster-based models, the total power of the dominant components can be distributed indistinctly throughout the clusters. Here, without loss of generality, we assume that two specular components are allocated to cluster 1, and each of the remaining clusters may include one specular component. With this in mind, the dominant specular component, denoted by $p_i$, of the $i$th cluster can be expressed as
 \begin{equation}
 \label{Lema1eq2}
         p_i=\begin{cases}
       U_ie^{j\varphi_i}, & \text{for $\forall {i} >1$} \\
       V_1e^{j\phi_1}+ V_2e^{j\phi_2}=e^{j\phi_1}\left ( V_1+ V_2e^{j\alpha} \right ), & \text{for $ {i}=1$},
         \end{cases}
\end{equation}
where $\alpha=\phi_2-\phi_1$ is the phase difference between the specular components of cluster 1. Given that $\phi_n \sim \mathcal{U}(0, 2\pi)$ and because of the modulo $2\pi$ operation, it follows that, $\alpha\sim \mathcal{U}(0, 2\pi)$ \cite{Rao2015}. 

Let us consider the channel model given in \eqref{Lema1eq1} conditioned to a particular realization $\alpha=\theta$ of the RV characterizing the phase difference between the two LOS components in cluster 1. Then, \eqref{Lema1eq1} can be seen as a $\kappa$-$\mu$ shadowed distributed RV for a given $\theta$, where $\abs{p_1}$ is now constant (i.e., deterministic) and no longer arbitrarily distributed. Based on this, the mean power of the dominant components for the underlying $\kappa$-$\mu$ shadowed RV in \eqref{Lema1eq1}, conditioned on $\theta$, i.e., $d_\theta^2 \stackrel{\Delta}{=} d^2|_{\alpha=\theta}$ is given by
\begin{align}
\label{Lema1eq3}
 d_\theta^2 =&  \sum^\mu_{i=1} \abs{p_i}^2=\abs{V_1+ V_2e^{j\theta} }^2 + \sum^\mu_{i=2} U^2_i
 \nonumber \\ =&
{\abs{V_1+ V_2(\cos{\theta}+j\sin{\theta})}^2  + \sum^\mu_{i=2} U^2_i}
  \nonumber \\ =&
{ (V_1+V_2\cos{\theta})^2+(V_2\sin{\theta})^2  + \sum^\mu_{i=2} U^2_i}
\nonumber \\ =&
{ 
 V_1^2+2V_1V_2 \cos{\theta}+V_2^2 (\cos^2{\theta}+\sin^2{\theta})+ \sum^\mu_{i=2} U^2_i}
\nonumber \\ =&
 V_1^2+V_2^2+2V_1V_2\cos{\theta}+ \sum^\mu_{i=2} U^2_i.
\end{align}
With the help of \eqref{Lema1eq3}, the ratio between the total power of the dominant components and the total power of the scattered waves, conditioned on $\theta$ can be formulated as
\begin{align}
\label{Lema1eq4}
 \kappa_\theta = \frac{d_\theta^2}{2\sigma^2 \mu}.
\end{align}
The conditional average SNR for the fading model described in \eqref{Lema1eq1} will be
\begin{align}
\label{Lema1eq5}
 \overline{\gamma}_\theta =&{\frac{E_s}{N_0}} \mathbb{E}\left \{ {W_{\theta}}\right \}
\nonumber\\=&{\frac{E_s}{N_0}}\left(V_1^2+V_2^2+2V_1V_2\cos{\theta}+ \sum^\mu_{i=2} U^2_i+2\sigma^2\mu\right) 
\nonumber \\ =& {\frac{E_s}{N_0}}2\sigma^2\mu(1+\kappa_\theta).
\end{align}
{where $W_{\theta}$ denotes the signal power in \eqref{Lema1eq1}, in which $\abs{p_1}$ is conditioned on $\theta$.} 
Hence, with the above definitions, we are ready to find an insightful connection between the $\kappa$-$\mu$ shadowed and the MFTR models. To accomplish this, by inserting \eqref{eq6}, \eqref{eq7} and \eqref{Lema1eq3}  into \eqref{Lema1eq4}, we get
\begin{align}
\label{Lema1eq7}
  \kappa_\theta = &\frac{V_1^2+V_2^2+2V_1V_2\cos{\theta}+\sum\limits_{i = 2}^\mu  U^2_i}{2\sigma^2 \mu}
	\nonumber \\ = & \frac{\left( V_{1}^{2}+V_{2}^{2} +\sum\limits_{i = 2}^\mu  U^2_i \right)\left( 1+\Delta \cos \theta  \right)}{2{{\sigma }^{2}}\mu }
	\nonumber \\ = & K\left( 1+\Delta \cos \theta  \right).
\end{align}
Note that from {\eqref{eq8}} and \eqref{Lema1eq5} we can write, respectively
\begin{align}
\label{Lema1eq8}
 \frac{1+K}{\overline{\gamma }\ }=\frac{1}{ {\frac{E_s}{N_0}}2\sigma^2\mu},
\end{align}
\begin{align}
\label{Lema1eq9}
 \frac{1+\kappa_\theta}{\overline{\gamma }_\theta\ }=\frac{1}{ {\frac{E_s}{N_0}}2\sigma^2\mu},
\end{align}
and equating \eqref{Lema1eq8} and \eqref{Lema1eq9}, it is clear that
\begin{align}
\label{Lema1eq10}
 \frac{1+\kappa_\theta}{\overline{\gamma }_\theta\ }= \frac{1+K}{\overline{\gamma }\ }.
\end{align}
Here, we point out that with the aid of the key findings in \eqref{Lema1eq7} and \eqref{Lema1eq10}, we can derive the MGF of the MFTR model from the conditional MGF of the $\kappa$-$\mu$ shadowed, which is given by 
\begin{align}
\label{Lema1eq11}
  {{M}^{\kappa \mu s}_{\gamma \left| \theta  \right.}}(s)=\frac{{{\left( 1-\frac{\overline{{{\gamma}}}_{\theta }}{\mu \left( 1+{{\kappa }_{\theta }} \right)}s \right)}^{m-\mu }}}{{{\left( 1-\frac{\mu {{\kappa }_{\theta }}+m}{m}\frac{\overline{{{\gamma }}}_{\theta }}{\mu \left( 1+{{\kappa }_{\theta }} \right)}s \right)}^{m}}}
\end{align}
Now, taking into account \eqref{Lema1eq7} and \eqref{Lema1eq10} into \eqref{Lema1eq11}, yields
\begin{align}
\label{Lema1eq12}
  {{M}^{\kappa \mu s}_{\gamma \left| \theta  \right.}}(s)=\frac{{{m}^{m}}{{\mu }^{\mu }}{{\left( 1+K \right)}^{\mu }}{{\left( \mu \left( 1+K \right)-\overline{\gamma }s \right)}^{m-\mu }}}{\left ( a(s)+b(s) \cos{\theta} \right )^{m}}
\end{align}
where, we have defined
\begin{align}
\label{Lema1eq13}
  a(s)=m\mu \left( 1+K \right)-\left( \mu K+m \right)\overline{\gamma }s, \quad b(s)=-\mu K\Delta \overline{\gamma }s.
\end{align}
The MGF of the SNR of the MFTR model can be obtained by averaging \eqref{Lema1eq12} with respect to the {RV\footnote{Recall that $\frac{1}{2\pi}\int_{0}^{2\pi}f(\theta)d\theta\equiv\frac{1}{\pi}\int_{0}^{\pi}f(\theta)d\theta$ because the integrand is symmetric with respect to $\pi$.}} $\theta$, i.e.,
\begin{align}
\label{Lema1eq14}
 {{M}_{\gamma }}\left( s \right)=\frac{m^m\mu ^\mu  \left( 1+K \right)^\mu }{\left( \mu \left( 1+K \right)-\overline{\gamma }s \right)^{\mu-m }}\frac{1}{\pi}\underset{I_1}{\underbrace{\int_{0}^{\pi}\frac{d\theta}{\left ( a(s)+b(s) \cos{\theta} \right )^{m}}}},
\end{align}
Using~\cite[Eq.~(3.661.4)]{Gradshteyn},~$I_1$ can be evaluated in exact closed-form; thus \eqref{eq9} is obtained, in which $\mathcal{R}(\mu,m,K,\Delta;s)=\left [ a(s) \right ]^2-\left [ b(s) \right ]^{2}$. This completes the proof.

%
\section{Proofs of Lemmas~\ref{lemma2} and~\ref{lemma3} }
\subsection{Proof of Lemma~\ref{lemma2}}\label{ap:PFDCDF}
Notice that the $\mathcal{R}(\mu,m,K,\Delta;s)$ polynomial within the MGF given in \eqref{eq9} can be factored as
\begin{align}
\label{Lema2eq1}
\mathcal{R}&(\mu,m,K,\Delta;s)= \left[ \left( 1+K \right)m\mu -\left( m+\mu K\left( 1+\Delta  \right) \right)\overline{\gamma }s \right]\nonumber \\  &\times \left[ \left( 1+K \right)m\mu -\left( m+\mu K\left( 1-\Delta  \right) \right)\overline{\gamma }s \right].
\end{align}
For the sake of mathematical legibility, we define the following ancillary terms
\begin{align}
\label{Lema2eq2}
 a_1=&\frac{\left( 1+K \right)m\mu }{\left( m+\mu K \right)\overline{\gamma }}, \quad a_2=\frac{\left( 1+K \right)m\mu }{\left( m+\mu K\left( 1+\Delta  \right) \right)\overline{\gamma }}, \nonumber \\ 
 a_3=&\frac{\left( 1+K \right)m\mu }{\left( m+\mu K\left( 1-\Delta  \right) \right)\overline{\gamma }}, \quad   a_4=\frac{\left( 1+K \right)\mu }{\overline{\gamma }}.
\end{align}
Inserting~\eqref{eq11} together with~\eqref{Lema2eq1}-\eqref{Lema2eq2} into~\eqref{eq9}, the MGF of the MFTR model with $m\in \mathbb{Z}^+$ can be rewritten as
\begin{align}
\label{Lema2eq3}
  {{M}_{\gamma }}\left( s \right)=&\frac{\left( a_2 a_3 \right)^{m/2}}{\Gamma\left( \mu \right)2^{m-1}a_4^{m-\mu}} \sum_{q=0}^{  \left \lfloor \frac{m-1}{2} \right \rfloor}\frac{C_q^{m-1}  }{(-1)^{-q}}\left[ \frac{\sqrt{a_2a_3}}{a_1} \right]^{m-1-2q} \nonumber \\ &  \times
  \frac{\Gamma\left( \mu \right)}{s^{\mu}}\left( 1-\frac{a_1}{s} \right)^{m-1-2q}\left( 1-\frac{a_2}{s} \right)^{\frac{1}{2}+q-m}\nonumber \\ &  \times \left( 1-\frac{a_3}{s} \right)^{\frac{1}{2}+q-m}\left( 1-\frac{a_4}{s} \right)^{ m-\mu}.
\end{align}
Finally, the PDF is obtained straightforwardly from this MGF through the inverse Laplace transform (LT), i.e., $f_\gamma(x)=\mathcal{L}^{-1}\left [ M_\gamma(-s);x \right ]$; thus,~\eqref{eq13} arises from~\eqref{Lema2eq3} with the help of the LT pair given in~\cite[Eq.~(4.24.3)]{Srivastava}. Similarly,~\eqref{eq14} is obtained by considering that $F_\gamma(x)=\mathcal{L}^{-1}\left [ M_\gamma(-s)/s;x \right ]$.

\subsection{Proof of Lemma~\ref{lemma3}}\label{ap:PFDCDFreal}
Many distributions in the literature of wireless channel modeling can be expressed in terms of a mixture of Gamma distributions, either in exact \cite{Ermolova2016} or approximate form \cite{mixtureGammas2001}. On this basis, we aim to express the MFTR's statistics as a mixture of Gamma distributions. To do this, we start from the exact expression of the $\kappa$-$\mu$ shadowed model as an infinite mixture of Gamma distributions \cite[eq. 25]{pablo2021}. Hence, the statistics of the MFTR distribution when conditioned on $\theta$ are those of the $\kappa$-$\mu$ shadowed model, so that the PDF and CDF can be given as:
\begin{align}
\label{Lema3eq1}
f_{\gamma\left| \theta  \right.}^{\kappa\mu s}(x)= \sum_{i=0}^{\infty}w_i\left| \theta  \right. f_X^{\rm G} \left(   \mu+i;\frac{(\mu+i)\overline{\gamma}_\theta}{\mu(1+\kappa_\theta)} ;x\right),
\end{align}
\begin{align}
\label{Lema3eq2}
F_{\gamma\left| \theta  \right.}^{\kappa\mu s}(x)= \sum_{i=0}^{\infty}w_i\left| \theta  \right. F_X^{\rm G} \left(   \mu+i;\frac{(\mu+i)\overline{\gamma}_\theta}{\mu(1+\kappa_\theta)} ;x\right),
\end{align}
where the PDF and CDF of the Gamma distribution were defined in \eqref{eq17}, 
and the conditional mixture coefficients are given by
\begin{align}
\label{Lema3eq5}
w_i\left| \theta  \right.=\frac{\Gamma(m+i)(\mu \kappa_\theta)^i m^m}{\Gamma(m)\Gamma(i+1) \left ( \mu \kappa_\theta +m \right )^{m+i}}.
\end{align}
Applying the connection between the $\kappa$-$\mu$ shadowed and the MFTR models as shown in~\eqref{Lema1eq7} and~\eqref{Lema1eq10}, the PDF and CDF distributions of the MFTR model are formulated after averaging over the distribution of $\theta$ as 
\begin{align}
\label{Lema3eq6}
f_{\gamma}(x)= \sum_{i=0}^{\infty}w_i f_X^{\rm G} \left(   \mu+i;\frac{(\mu+i)\overline{\gamma}}{\mu(1+\kappa)} ;x\right),
\end{align}
\begin{align}
\label{Lema3eq7}
F_{\gamma}(x)= \sum_{i=0}^{\infty}w_i F_X^{\rm G} \left(   \mu+i;\frac{(\mu+i)\overline{\gamma}}{\mu(1+\kappa)} ;x\right),
\end{align}
where now the weighting coefficients, $w_i$, are rewritten as
\begin{align}
\label{Lema3eq8}
w_i=&\frac{\Gamma(m+i)(\mu K)^i m^m}{\Gamma(m)\Gamma(i+1)} \nonumber \\ &  \times\underset{I_2}{\underbrace{\frac{1}{\pi}\int_{0}^{\pi}\frac{  \left( 1+\Delta \cos \theta  \right)^i}{\left( \mu K\left( 1+\Delta \cos \theta  \right)+m \right)^{m+i} }d\theta}}.
\end{align}
{Using the results in the Appendix of \cite{ftr2}, $I_2$ can be evaluated in closed-form fashion as
\begin{align}
\label{Lema3eq9}
\begin{array}{l}
 I_2  = \frac{{\left( {1 - \Delta } \right)^i }}{{\sqrt \pi  (\mu K(1 - \Delta ) + m)^{m + i} }}\sum\limits_{q = 0}^i {\left( {\begin{array}{*{20}c}
   i  \\
   q  \\
\end{array}} \right)} \frac{{\Gamma \left( {q + \frac{1}{2}} \right)}}{{\Gamma \left( {q + 1} \right)}}\left( {\frac{{2\Delta }}{{1 - \Delta }}} \right)^q  \\ 
 \quad \quad \quad \quad \quad  \times {}_2F_1 \left( {m + i,q + \frac{1}{2};q + 1;\frac{{ - 2\mu K\Delta }}{{\mu K(1 - \Delta ) + m}}} \right), \\ 
 \end{array}
\end{align}
when $m$ is an arbitrary real number.
}
Then, by substituting~\eqref{Lema3eq9} into~\eqref{Lema3eq8} and then combining the resulting expression with~\eqref{Lema3eq6} and~\eqref{Lema3eq7}, the PDF and CDF of the MFTR model for arbitrary values can be expressed as in~\eqref{eq15} and~\eqref{eq16}, respectively. This completes the proof.

{
}

\section{Proofs of Propositions~\ref{propo1}, and~\ref{propo2}}
\label{ap:asd}


\subsection{Asymptotic Outage Probability}\label{AsympCDF}
 In order to derive the asymptotic outage probability, we take advantage of the connections between the $\kappa$-$\mu$ shadowed and the MFTR models obtained in Appendix~\ref{ap:MGF}.
 For this purpose, we start from the conditional\footnote{In practice, it suffices to replace $\overline{\gamma}$ and $\kappa$ by $\overline{\gamma}_\theta$ and $\kappa_\theta$, respectively, in the original $\kappa$-$\mu$ shadowed asymptotic CDF, as indicated in Appendix~\ref{ap:MGF}.  }$\kappa$-$\mu$ shadowed asymptotic CDF, given by~\cite[Eq.~(35)]{Fernandez2020}
\begin{align}
\label{apenIIIeq1}
F_{\gamma\left| \theta  \right.}^{\kappa\mu s}(x)\simeq \frac{\mu^{\mu-1}(1+\kappa_\theta)^{\mu}m^m}{\left(\kappa_\theta\mu+m  \right)^m\Gamma(\mu)}\left(\frac{x}{\overline{\gamma}_\theta}  \right)^{\mu}.
\end{align}
Next, using the relationships between the $\kappa$-$\mu$ shadowed and the MFTR models given in~\eqref{Lema1eq7} and~\eqref{Lema1eq10} into~\eqref{apenIIIeq1}, we get
\begin{align}
\label{apenIIIeq2}
{F_{\gamma}(x)}\simeq& \frac{\mu^{\mu-1}(1+K)^{\mu}m^m}{\Gamma(\mu)}\left(\frac{x}{\overline{\gamma}}  \right)^{\mu} \nonumber \\ & \times \underset{I_3}{\underbrace{\frac{1}{\pi}\int_{0}^{\pi}\frac{1}{\left(\mu K\left(1+\Delta\cos\theta  \right)  +m  \right)^m}d\theta}}.
\end{align}
Employing~\cite[eq.~(38)]{Corrales2019}, the integral in $I_3$ can be expressed in simple exact closed-form as   
\begin{align}
\label{apenIIIeq3}
I_3=&\left( m-\left( \Delta-1 \right)K \mu\right)^{-m}\nonumber \\ & \times {}_{2}F_1 \left( \frac{1}{2},m;1;\frac{2\Delta K \mu}{K\mu\left( 1-\Delta \right)+m} \right)
\end{align}
Substituting~\eqref{apenIIIeq3} into~\eqref{apenIIIeq2}, the MFTR asymptotic CDF is attained as
\begin{align}
\label{apenIIIeq4}
F_{\gamma}(x)\simeq& \frac{\mu^{\mu-1}(1+K)^{\mu}m^m}{\left( m-\left( \Delta-1 \right)K \mu\right)^m\Gamma(\mu)}\left(\frac{\gamma}{\overline{\gamma}}  \right)^{\mu} \nonumber \\ & \times {}_{2}F_1 \left( \frac{1}{2},m;1;\frac{2\Delta K \mu}{K\mu\left( 1-\Delta \right)+m} \right).
\end{align}
Finally, taking into account that $P_{\rm out}\simeq F_{\gamma}(2^{R_{th}}-1)$ with $F_{\gamma}(\cdot)$ given in~\eqref{apenIIIeq4}, the asymptotic outage probability can be reached as in~\eqref{eq25}, which concludes the proof.

\subsection{\rm AoF}\label{AoF}
Similar to the methodology described in Appendix~\ref{AsympCDF}, we calculate the AoF of the MFTR model from the conditional second moment of the $\kappa$-$\mu$ shadowed distribution, which is given by~\cite[Eq.~(3.10)]{celiathesis}
\begin{align}
\label{apenIIIeq5}
\mathbb{E}\left \{ \gamma_{\left| \theta  \right.}^2 \right \}^{\kappa\mu s}=\frac{\overline{\gamma}_\theta^2 \left( m(1+\mu)(1+2 \kappa_\theta)+\mu \kappa_\theta^2(1+m) \right) }{m \mu (1+\kappa_\theta)^2}
\end{align}
Then, using the connections between the $\kappa$-$\mu$ shadowed and the MFTR models obtained in~\eqref{Lema1eq7} and~\eqref{Lema1eq10} into~\eqref{apenIIIeq5}, the second moment of the MFTR model is expressed as
\begin{align}
\label{apenIIIeq6}
\mathbb{E}\left \{ \gamma^2 \right \}=\frac{\overline{\gamma}^2 \left( m(1+\mu)(1+2 \mathbb{E}\left \{ \kappa_\theta \right \})+\mu \mathbb{E}\left \{ \kappa_\theta^2 \right \}(1+m) \right) }{m \mu (1+K)^2}
\end{align}
where
\begin{align}
\label{apenIIIeq7}
\mathbb{E}\left \{ \kappa_\theta \right \}=\frac{1}{\pi}\int_{0}^{\pi}K(1+\Delta \cos \theta)d\theta=K,
\end{align}
and
\begin{align}
\label{apenIIIeq8}
\mathbb{E}\left \{ \kappa_\theta^2 \right \}=\frac{1}{\pi}\int_{0}^{\pi}\left( K(1+\Delta \cos \theta) \right)^2d\theta=K^2\left( 1+\frac{\Delta^2}{2} \right).
\end{align}
Substituting \eqref{apenIIIeq7} and~\eqref{apenIIIeq8} into~\eqref{apenIIIeq6} yields
\begin{align}
\label{apenIIIeq9}
\mathbb{E}\left \{ \gamma^2 \right \}=&\frac{\overline{\gamma}^2}{m \mu (1+K)^2}\Biggl(   m(1+\mu)(1+2 K)+\mu K^2 \nonumber \\ & \times
\left( 1+\frac{\Delta^2}{2} \right) (1+m)      \Biggl).
\end{align}
Finally, inserting~\eqref{apenIIIeq9} together with $\mathbb{E}\left \{ \gamma \right \}^2=\overline{\gamma}^2$ into~\eqref{eq26} and after some algebraic manipulations, the AoF of the MFTR model can be attained as in~\eqref{eq27}. This completes the proof.


\bibliographystyle{ieeetr}
\bibliography{bibfile}

\end{document}